\theoremstyle{thmstyleone}%
\newtheorem{theorem}{Theorem}
\theoremstyle{thmstyletwo}%
\theoremstyle{thmstylethree}%
\newtheorem{definition}{Definition}%
\begin{document}

\title[Generalized Derangetropy Functionals for Modeling Cyclical Information Flow]{Generalized Derangetropy Functionals for Modeling Cyclical Information Flow}


\author*[1]{\fnm{Masoud} \sur{Ataei}}\email{masoud.ataei@utoronto.ca}

\author[2]{\fnm{Xiaogang} \sur{Wang}}\email{stevenw@yorku.ca}

\affil[1]{\normalsize\orgdiv{Department of Mathematical and Computational Sciences},\\ \orgname{University of Toronto}, \state{Ontario}, \country{Canada}}

\affil[2]{\normalsize\orgdiv{Department of Mathematics and Statistics}, \orgname{York University}, \state{Ontario}, \country{Canada}}



\abstract{This paper introduces a  framework for modeling cyclical and feedback-driven information flow through a generalized family of entropy-modulated transformations called derangetropy functionals. Unlike scalar and static entropy measures such as Shannon entropy, these functionals act directly on probability densities and provide a topographical representation of information structure across the support of the distribution. The framework captures periodic and self-referential aspects of information distribution and encodes them through functional operators governed by nonlinear differential equations. When applied recursively, these operators induce a spectral diffusion process governed by the heat equation, leading to convergence toward a Gaussian characteristic function. This convergence theorem provides a unified analytical foundation for describing the long-term dynamics of information under cyclic modulation. The proposed framework offers new tools for analyzing the temporal evolution of information in systems characterized by periodic structure, stochastic feedback, and delayed interaction, with applications in artificial neural networks, communication theory, and non-equilibrium statistical mechanics.
 }

\keywords{Information Dynamics, Probability Distributions, Functional Information Theory, Entropy, Nonlinear Differential Equations, Cyclical Information Flow}



\maketitle

\section{Introduction}	
The propagation of information within complex systems often manifests through intricate temporal and structural patterns, marked by recurrence, feedback, and modulated fluctuations. Such behavior is especially prevalent in biological networks, cognitive architectures, and cyclic physical processes, where probability distributions evolve under local stochastic rules, coupled with global periodic or geometrical constraints. While classical entropy measures, particularly Shannon entropy \cite{Shannon1948}, serve as foundational tools in quantifying uncertainty, their scalar and static nature renders them inadequate for capturing the dynamic evolution of informational states in feedback-driven or periodically modulated systems. This shortcoming is particularly pronounced in domains such as artificial neural networks or biological circuits, where probability densities evolve through temporally structured interactions.

In response to these limitations, the present work introduces a generalized framework for derangetropy, a class of entropy-modulated functional transformations designed to model cyclical and feedback-induced information flow. Extending prior developments in derangetropy theory \cite{ataei2024derangetropy}, we delineate and analyze three distinct types of derangetropy functionals, each encoding a specific mode of probabilistic modulation. Type-I derangetropy induces an entropy-attenuating transformation that sharpens low-entropy regions while preserving the underlying support of the original distribution. This type is well-suited to systems exhibiting structural persistence under dynamical accumulation, such as resting-state neurophysiological activity. In contrast, Type-II derangetropy acts as an entropy-amplifying transformation, accentuating high-entropy regions, thus capturing phenomena governed by stochastic dispersion or instability, as in turbulent flows or volatile economic systems.

Type-III derangetropy departs from entropy coupling and instead introduces a phase-modulated transformation, redistributing probability mass via a quadratic sine function. This structure is particularly effective in modeling resonance-driven dynamics, as observed in electrophysiological systems or harmonic oscillators. A notable property of derangetropy is its asymptotic behavior: under recursive application, it induces a diffusion process governed by the heat equation, with convergence to a Gaussian characteristic function. This convergence theorem not only unifies the three derangetropy classes but also furnishes a robust analytical tool for studying information dynamics in systems subject to periodic forcing or structured feedback.

The present framework situates derangetropy within a broader effort to characterize information flow in systems governed by cyclical structure or delayed interactions. Recent developments in this area span multiple disciplines. In neuroscience, oscillatory cortical dynamics are recognized as critical for synchronization and information routing, necessitating analytical tools capable of capturing the redistribution of informational content over time \cite{buzsaki2004neuronal}. In information theory, communication over channels with cyclostationary noise requires capacity analyses that incorporate periodic variability \cite{shlezinger2019capacity, dabora2023capacity}. Similarly, in control systems and machine learning, recurrent feedback architectures give rise to non-Markovian dynamics, prompting the use of directed information and other causal metrics \cite{kramer1998directed, quinn2011estimating}, although these methods often lack a global functional formulation capturing the evolution of distributional support.

Within statistical physics and non-equilibrium thermodynamics, mutual information has been employed to quantify exchange between coupled subsystems \cite{vin2009mutual}, and bipartite models have been advanced to link informational exchange with entropy production in open systems \cite{horowitz2014thermodynamics}. These ideas have also been extended to quantum domains, where coherence and entanglement alter the informational landscape \cite{ptaszynski2019thermodynamics}. Adaptive models for information flow in ecologies and engineered systems likewise reflect sensitivity to environmental variation \cite{nicoletti2021mutual}, underscoring the need for tools that encode temporal plasticity and structural feedback. The information dynamics in financial systems have been further investigated through the lens of transfer entropy, where it is established that a bidirectional causal relationship exists between the underlying processes governing realized and implied volatilities of the stock market \cite{ataei2021theory}.

The generalized derangetropy functionals developed herein contribute to this body of work by enabling localized redistribution of probability mass through functionals that incorporate feedback, periodicity, and entropy coupling. The universal convergence theorem, central to this framework, provides an analytical basis for interpreting long-term behavior under recursive application, thereby connecting local transformations with global dynamical outcomes. By integrating structural and dynamical features into a unified model, this approach extends traditional entropy measures and affords novel insight into the evolution and stabilization of information in complex environments.

The paper proceeds as follows. In Section~\ref{Sec:Type-I}, we define the Type-I derangetropy and derive its associated governing differential equation and self-referential properties. Section~\ref{Sec:Type-II} addresses the Type-II derangetropy, characterizing its entropy-amplifying transformation and the associated nonlinear flow dynamics. Section~\ref{Sec:Type-III} develops the Type-III derangetropy and presents a convergence analysis emphasizing Gaussianization and spectral diffusion. We conclude in Section~\ref{Sec:Conclusion} with a discussion of prospective applications in structured probabilistic models and extensions to interacting dynamical systems.

\section{Type-I Derangetropy}
\label{Sec:Type-I}
\subsection{Definition}

Consider a probability space $(\Omega, \mathscr{F}, \mathbb{P})$, where $\Omega$ denotes the sample space, $\mathscr{F}$ is a $\sigma$-algebra of measurable subsets of $\Omega$, and $\mathbb{P}$ is a probability measure defined on $\mathscr{F}$. Let $X: \Omega \rightarrow \mathbb{R}$ be a real-valued random variable that is measurable with respect to the Borel $\sigma$-algebra $\mathscr{B}_{\mathbb{R}}$ on $\mathbb{R}$. Suppose that $X$ has an absolutely continuous distribution with an associated probability density function (PDF) $f \in$ $\mathcal{L}^2\left(\mathbb{R}, \mathscr{B}_{\mathbb{R}}, \lambda\right)$, where $\lambda$ represents the Lebesgue measure on $\mathbb{R}$. The cumulative distribution function (CDF) corresponding to $X$ is given by
\begin{equation*}
	F(x) = \int_{-\infty}^x f(t) \, d\lambda(t), \quad x \in \mathbb{R}.
\end{equation*}

We now introduce the notion of Type-I derangetropy, which refines probability distributions while incorporating entropy-based modulation.
\begin{definition}[Type-I Derangetropy]
	\label{Def:TypeI}
	The \textit{Type-I derangetropy functional} $\rho: \mathcal{L}^2(\mathbb{R}, \mathscr{B}_{\mathbb{R}}, \lambda) \rightarrow \mathcal{L}^2(\mathbb{R}, \mathscr{B}_{\mathbb{R}}, \lambda)$ is defined by the following mapping
	\begin{equation}
		\rho[f](x) = \left(\frac{24}{\pi e}\right) \sin(\pi F(x)) F(x)^{F(x)} (1-F(x))^{1-F(x)}  f(x).
	\end{equation}
	Alternatively, it can be rewritten as a Fourier-type transformation
	\begin{equation}
		\rho[f](x) = \left(\frac{24}{\pi e}\right) \sin(\pi F(x))\,  e^{-H_B(F(x))} f(x),
	\end{equation}
	where
	\begin{equation}
		H_B(F(x)) = -F(x) \log (F(x)) - (1-F(x)) \log (1-F(x)),
	\end{equation}
	is the \textit{Shannon entropy} for a Bernoulli distribution with success probability $p = F(x)$. The~evaluation of the derangetropy functional of Type-I at a specific point $x \in \mathbb{R}$ is denoted by $\rho_f(x)$.
\end{definition}

The sine function in the transformation plays a crucial role in systems that exhibit periodicity, where information alternates between concentrated and dispersed states. This function serves as both a modulation mechanism and a projection operator. As a modulation mechanism, it introduces controlled oscillations that reflect inherent periodic characteristics in the probability distribution. The oscillatory nature of the sine function ensures that the transformation redistributes probability mass while preserving essential features of the original density. As a projection operator, it maps probability densities into a transformed space where local variations become more prominent, allowing for a more refined probabilistic representation. This behavior is particularly relevant in cyclic systems such as time-series processes and wave-based phenomena, where periodic effects shape the behavior of probability distributions.

The presence of $H_B(F(x))$ in the transformation introduces a localized uncertainty measure, quantifying how probability mass is distributed at any given point. This entropy function describes the informational balance between the cumulative probability $F(x)$ and its complement $1-F(x)$, providing a measure of relative uncertainty in the distribution. By incorporating entropy directly into the transformation, Type-I derangetropy adapts the probability density in a manner influenced by local uncertainty. The weighting factor $e^{-H_B(F(x))}$ modulates the extent to which the transformation affects the distribution, attenuating its influence in high-entropy regions while allowing stronger reshaping in low-entropy areas. This adaptive adjustment ensures that the transformation refines the distribution without enforcing uniform smoothing, preserving key probabilistic characteristics.

The explicit presence of $f(x)$ in the transformation ensures that Type-I derangetropy refines probability densities rather than reconstructing them entirely. Unlike convolution-based smoothing techniques or regularization approaches, this transformation maintains the local structural integrity of the distribution while selectively modifying probability mass based on entropy considerations. In regions of low entropy, the transformation retains the original density, preventing unnecessary redistribution. In contrast, in high-entropy regions, where uncertainty is greater, it modulates probability values in a controlled manner, refining the density without excessive distortion. This property distinguishes Type-I derangetropy from conventional entropy-based transformations that often impose global constraints or flatten distributions.

Type-I derangetropy is best suited for systems that exhibit a combination of periodicity and self-regulation, where information accumulates in a structured yet dynamic fashion. A canonical example lies in biological neural systems, such as resting-state EEG or fMRI signals, where regions of the brain alternate between high and low informational states. In such systems, the entropy-attenuating nature of Type-I ensures that stable, low-entropy zones retain their shape while more uncertain, transitional regions are selectively reshaped. This allows for the extraction of latent functional hierarchies that are otherwise obscured by global entropy measures like Shannon or R\'enyi entropy. In practice, when applied to empirical distributions over space or time, Type-I derangetropy acts as a local contrast enhancer, refining meaningful variations without flattening them, ideal for denoising or revealing subtle periodic signatures in oscillatory-yet-organized domains, such as circadian gene expressions, ecological cycles, or cortical oscillations.

\subsection{Mathematical Properties}
We now establish key mathematical properties of the Type-I derangetropy functional. For any absolutely continuous PDF $f(x)$, the functional $\rho[f](x)$ is a nonlinear operator that belongs to the space $C^{\infty}(\mathbb{R})$ and remains a valid probability density function.
\begin{theorem}
	For any absolutely continuous $f(x)$, the~derangetropy functional $\rho_f(x)$ is a valid probability density function.
\end{theorem}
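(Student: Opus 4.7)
The plan is to verify two conditions: pointwise non-negativity of $\rho[f](x)$, and that $\int_{\mathbb{R}} \rho[f](x)\,dx = 1$. Non-negativity is immediate: since $F(x) \in [0,1]$ for all $x$, the factor $\sin(\pi F(x))$ is nonnegative; the entropy weight $e^{-H_B(F(x))} = F(x)^{F(x)}(1-F(x))^{1-F(x)}$ is positive under the standard convention $0^0 := 1$; and both $f(x)$ and the constant $24/(\pi e)$ are positive. So $\rho[f](x) \geq 0$ almost everywhere on $\mathbb{R}$.

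For normalization, I would perform the change of variables $u = F(x)$, so that $du = f(x)\,dx$ and the map sends the support of $f$ onto $[0,1]$. The improper integral collapses to
\begin{equation*}
\int_{-\infty}^{\infty} \rho[f](x)\,dx = \frac{24}{\pi e}\int_0^1 \sin(\pi u)\,u^u (1-u)^{1-u}\,du,
\end{equation*}
which is independent of $f$. The theorem therefore reduces to verifying the single deterministic identity
\begin{equation*}
\int_0^1 \sin(\pi u)\,u^u (1-u)^{1-u}\,du = \frac{\pi e}{24}.
\end{equation*}

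The main obstacle is this closed-form evaluation. My approach would be to expand the entropy exponential as
\begin{equation*}
u^u(1-u)^{1-u} = \exp\bigl(u \log u + (1-u)\log(1-u)\bigr) = \sum_{n,m \ge 0}\frac{(u \log u)^{n}((1-u)\log(1-u))^{m}}{n!\,m!},
\end{equation*}
and then integrate term-by-term against $\sin(\pi u)$. Each resulting integral can be obtained by repeated differentiation in $a$ and $b$ of the generating integral $J(a,b) = \int_0^1 \sin(\pi u)\,u^a (1-u)^b\,du$, which in turn admits a representation through the Beta function and Kummer's confluent hypergeometric function ${}_1F_1$ via $\sin(\pi u) = (e^{i\pi u} - e^{-i\pi u})/(2i)$. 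Alternatively, the symmetry $u \mapsto 1-u$, which leaves both $\sin(\pi u)$ and $u^u(1-u)^{1-u}$ invariant, folds the integral onto $[0,1/2]$ and may simplify the resulting combinatorial identification.

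The hardest step is recognizing the precise value $\pi e/24$: this is where the seemingly arbitrary normalization constant in Definition~\ref{Def:TypeI} earns its presence, and it is the only substantive analytical work in the theorem. The remaining ingredients, non-negativity and the substitution $u = F(x)$, are routine once the integrand is rewritten in the Bernoulli-entropy form $e^{-H_B(F(x))}$.
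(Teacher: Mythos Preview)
Your overall architecture---non-negativity from the factors, followed by the substitution $u=F(x)$ to reduce everything to the single deterministic identity $\int_0^1 \sin(\pi u)\,u^u(1-u)^{1-u}\,du = \pi e/24$---is exactly what the paper does. In fact the paper's own proof stops precisely where you identify the ``main obstacle'': it simply cites \cite{ataei2024derangetropy} for the value of the integral and concludes. So at the level of this theorem your proposal is complete and matches the paper.

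Where you diverge is in the sketch for actually evaluating the integral. Your proposed route---expand $u^u(1-u)^{1-u}$ as a double exponential series in $u\log u$ and $(1-u)\log(1-u)$, then attack each term via differentiated Beta/Kummer integrals---is plausible in principle but not obviously convergent to a recognizable closed form; you have not indicated how the double sum would collapse to $\pi e/24$. By contrast, the technique the authors use for the companion identity (Appendix~A of the present paper, for the Type-II constant $\int_0^1 \sin(\pi z)\,z^{-z}(1-z)^{-(1-z)}\,dz = \pi/e$) is a contour-integration argument: write the integrand as the imaginary part of $e^{z(i\pi+\log(1-z)-\log z)}/(1-z)$, substitute $t=\log((1-z)/z)$, and evaluate a residue on the strip $|\Im u|\le \pi$. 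The same method is almost certainly what underlies the cited $\pi e/24$ evaluation. If you intend to supply a self-contained proof of the constant, that complex-analytic route is likely to be far cleaner than the series expansion you outline.
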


\begin{proof}
	To prove that $\rho_f(x)$ is a valid PDF, we need to show that $\rho_f(x) \geq 0$ for all $x \in \mathbb{R}$ and that $\int_{-\infty}^{\infty} \rho_f(x) d x=1$. The non-negativity of $\rho_f(x)$ is clear due to the non-negativity of its components. To verify the normalization condition, a change of variable $z=F(x)$ is applied. Since $d z=f(x) d x$, this transformation allows the integral of $\rho_f(x)$ to be rewritten in terms of $z$ as follows:
	\begin{equation}
		\int_{-\infty}^{\infty} \rho_f(x) d x=\int_0^1\left(\frac{24}{\pi e}\right) \sin (\pi z) z^z(1-z)^{1-z} d z.
	\end{equation}
	As established in \cite{ataei2024derangetropy}, the integral
	\begin{equation}
		\int_0^1 \sin (\pi z) z^z(1-z)^{1-z} d z=\frac{\pi e}{24},
	\end{equation}	
	which, in turn, implies that	
	\begin{equation}
		\int_{-\infty}^{\infty} \rho_f(x) d x=1.
	\end{equation}	
	Hence, $\rho_f(x)$ is a valid PDF.
\end{proof}

Since $\rho_f(x)$ is a valid PDF, it encapsulates the informational content of the original distribution $f(x)$ while recursively uncovering a hierarchical structure of its own information content. This iterative refinement is governed by the following recurrence relation
\begin{equation}
	\rho_f^{(n)}(x) = \left(\frac{24}{\pi e}\right) \sin\left(\pi \mathcal{G}_f^{(n-1)}(x)\right) e^{-H_B(\mathcal{G}_f^{(n-1)}(x))} \rho_f^{(n-1)}(x),
\end{equation}
where $\rho_f^{(n)}(x)$ represents the $n$th iteration of the derangetropy functional, and
$$
\mathcal{G}_f^{(n)}(x)=\int_{-\infty}^x \rho_f^{(n)}(t) d t
$$
is the associated CDF. The initial conditions for this recursion are given by
$$
\rho_f^{(0)}(x)=f(x) \quad \text{and} \quad \mathcal{G}_f^{(0)}(x)=F(x),
$$
where $f(x)$ and $F(x)$ denote the PDF and CDF of the original distribution, respectively. This recursive structure ensures that each subsequent iteration $\rho_f^{(n)}(x)$ integrates the informational content of all preceding layers, thereby constructing a hierarchical probabilistic representation of information.

Moreover, the self-referential hierarchical structure distinguishes Type-I derangetropy from conventional entropy-based transformations, which typically impose global constraints or flatten distributions. Furthermore, since the transformation retains absolute continuity, the process maintains smoothness across iterations, allowing for a deeper probabilistic analysis over successive refinements.

The Type-I derangetropy functional exhibits a structured refinement process, systematically redistributing probability mass while preserving fundamental probabilistic properties. This refinement is characterized by a second-order nonlinear differential equation that governs the evolution of probability accumulation in distribution functions. The interaction between entropy-based modulation and probabilistic redistribution ensures that the transformation dynamically adjusts the density function while retaining key informational characteristics.

An illustrative case emerges when considering a uniform distribution over the interval $(0,1)$, where the PDF is given by $f(x)=1$ and the corresponding CDF is $F(x)=x$. Under these conditions, the governing differential equation encapsulates an intricate equilibrium between entropy suppression, logistic accumulation, and oscillatory refinement. 

\begin{theorem}
	Let $X$ be a random variable following a uniform distribution on the interval (0, 1). Then, the Type-I derangetropy functional $\rho_f(x)$ satisfies the following {\color{black}second-order nonlinear ordinary differential equation}
	\begin{equation} \frac{d^2 \rho_f(x)}{d F(x)^2} + 2 \log \left(\frac{1 - F(x)}{F(x)}\right) \frac{d \rho_f(x)}{d F(x)} + \left[\pi^2 - \frac{1}{F(x)(1 - F(x))} + \log^2 \left(\frac{1 - F(x)}{F(x)}\right) \right] \rho_f(x) = 0, \end{equation}
	where the initial conditions are set as
	$$
	\left.\rho_f(x)\right|_{F(x)=0}=0, \quad \text { and }\left.\quad \frac{d \rho_f(x)}{d F(x)}\right|_{F(x)=0}=\frac{24}{e}.
	$$
\end{theorem}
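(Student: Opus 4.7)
The plan is to reduce everything to a direct verification, exploiting the fact that for the uniform distribution on $(0,1)$ we have $F(x)=x$, so differentiation with respect to $F(x)$ coincides with differentiation with respect to $x$. Writing $u=F(x)$, $A=24/(\pi e)$, and $g(u)=u^{u}(1-u)^{1-u}=e^{-H_B(u)}$, the functional becomes
\begin{equation*}
\rho_f(u)=A\sin(\pi u)\,g(u),
\end{equation*}
so the task is to verify a linear second-order ODE for this explicit product, together with its boundary values at $u=0$.

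The first step is to compute $g'$ and $g''$ by logarithmic differentiation. From $\log g(u)=u\log u+(1-u)\log(1-u)$ one obtains $g'/g=L$ with $L:=\log\!\bigl(u/(1-u)\bigr)$, and then differentiating once more yields $g''/g=L^{2}+1/\bigl(u(1-u)\bigr)$. Next I would apply the Leibniz product rule to $\rho_f=A\sin(\pi u)\,g(u)$, obtaining
\begin{equation*}
\rho_f'=A\,g\bigl[\pi\cos(\pi u)+L\sin(\pi u)\bigr],
\end{equation*}
\begin{equation*}
\rho_f''=A\,g\Bigl[-\pi^{2}\sin(\pi u)+2\pi L\cos(\pi u)+\bigl(L^{2}+\tfrac{1}{u(1-u)}\bigr)\sin(\pi u)\Bigr].
\end{equation*}

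Substituting these expressions into the left-hand side of the claimed ODE (noting that $\log((1-u)/u)=-L$, so the coefficient of $\rho_f'$ is $-2L$) and factoring out $A\,g(u)$, I expect the $\cos(\pi u)$ contributions $2\pi L$ and $-2\pi L$ to cancel identically, while the $\sin(\pi u)$ coefficients collapse: the $-\pi^{2}+\pi^{2}$, the $+1/(u(1-u))-1/(u(1-u))$, and the $L^{2}-2L^{2}+L^{2}$ terms each sum to zero. This is the main piece of bookkeeping, and it is routine rather than deep.

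The last step is to check the initial conditions by taking $u\to 0^{+}$. One has $g(0)=1$ (using $\lim_{u\to 0^{+}}u\log u=0$) and $\sin(0)=0$, so $\rho_f(0)=0$. For the derivative, the second term in $\rho_f'$ contains the indeterminate product $L\sin(\pi u)\sim (\log u)(\pi u)\to 0$, so $\rho_f'(0)=A\,g(0)\,\pi=24/e$, matching the stated condition. The only genuine subtlety is this boundary limit, which is handled by the standard fact that $u\log u\to 0$ as $u\to 0^{+}$; the rest is algebraic verification.
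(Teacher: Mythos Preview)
Your proof is correct. The algebraic verification works exactly as you describe: after factoring out $A\,g(u)$, the $\cos(\pi u)$ contributions cancel as $2\pi L-2\pi L=0$, and the $\sin(\pi u)$ coefficients collapse termwise to zero. Your handling of the boundary behavior via $u\log u\to 0$ is also correct and is the only place where any care is needed.

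Your approach, however, differs from the paper's. The paper does not verify the ODE directly; instead it \emph{solves} the ODE by means of an integrating factor. Specifically, it observes that multiplying by $\mu(u)=u^{u}(1-u)^{1-u}$ and writing $\rho_f=\mu\,v$ eliminates the first-order term and reduces the equation to the simple harmonic oscillator $v''+\pi^{2}v=0$, whence $v(u)=C_{1}\sin(\pi u)+C_{2}\cos(\pi u)$; the initial conditions then fix $C_{1}=24/(\pi e)$ and $C_{2}=0$, recovering the defining formula for $\rho_f$. Your direct substitution is more elementary and requires no insight into the structure of the ODE, while the paper's reduction explains \emph{why} the functional has the form $g(u)\sin(\pi u)$ in the first place: the factor $g(u)$ is precisely the integrating factor, and the sine arises as the harmonic-oscillator solution selected by the initial data. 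In effect, the two arguments are inverse to one another---you differentiate a product, the paper factors the equation---and each is complete on its own.
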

\begin{proof}
	To solve this differential equation, we introduce an integrating factor $\mu(F(x))$ to eliminate the first-order derivative term. As established in \cite{ataei2024derangetropy}, this integrating factor is given by	
	\begin{equation}
		\mu(F(x))=e^{F(x)\log(F(x)) + (1-F(x))\log(1-F(x))}.
	\end{equation}	
	By rewriting the exponential term, this expression simplifies to
	\begin{equation}
		\mu(F(x))=F(x)^{F(x)}(1-F(x))^{1-F(x)} .
	\end{equation}	
	Multiplying the entire equation by this integrating factor transforms it into a simpler form, allowing us to define a new function $v(F(x))$ such that	
	\begin{equation}
		\rho_f(x)=\mu(F(x)) v(F(x)).
	\end{equation}	
	Substituting this transformation into the differential equation, the first-order derivative term cancels, reducing the equation to	
	\begin{equation}
		\frac{d^2 v(F(x))}{d F(x)^2}+\pi^2 v(F(x))=0 .
	\end{equation}
	This equation is a standard second-order homogeneous linear differential equation with constant coefficients. Its general solution is	
	\begin{equation}
		v(F(x))=C_1 \sin (\pi F(x))+C_2 \cos (\pi F(x)),
	\end{equation}	
	where $C_1$ and $C_2$ are constants. Finally, evaluating the initial values yields $C_1=\frac{24}{\pi e}$ and $C_2=0$, implying that $\rho_f(x)$ is indeed a solution to the differential equation.
\end{proof}

The differential equation governing Type-I derangetropy describes how probability density evolves under its transformation, revealing a deep connection between probability accumulation and entropy modulation. The presence of the log-odds function, $\log \left(\frac{1-F(x)}{F(x)}\right)$, naturally arises in entropy-driven transformations and provides insight into the underlying refinement process. Since this function encodes the relative probability mass distribution, its role in the differential equation suggests that the rate of change in probability density is governed by the balance of mass accumulation within the total probability space. The transformation redistributes probability mass in a way that maintains equilibrium between entropy modulation and structured probabilistic refinement, ensuring that high-uncertainty regions are dynamically adjusted while preserving the essential probabilistic features of the original distribution.

The derivative of the log-odds function, given by
\begin{equation}
	\frac{d}{d F(x)} \log \left(\frac{1-F(x)}{F(x)}\right)=\frac{d}{d F(x)} H_B(F(x))=\frac{1}{F(x)(1-F(x))}
\end{equation}
quantifies the local entropy gradient, which dictates the strength of the refinement process. The transformation is most active near $F(x)=0.5$, where entropy gradients reach their maximum, and progressively weaker as $F(x) \rightarrow 0$ or $F(x) \rightarrow 1$, where entropy naturally diminishes. This implies that the transformation does not merely smooth probability distributions but rather introduces structured refinements that reflect localized entropy variations. In other words, the transformation modulates probability mass according to the entropy landscape, amplifying changes where uncertainty is highest and stabilizing regions where probability mass is more concentrated. The function $\log \left(\frac{1-F(x)}{F(x)}\right)$ is anti-symmetric about $F(x)=0.5$, indicating that the governing equation differentiates probability mass accumulation on either side of the median. This property ensures that refinements occur in a balanced manner, preventing distortions in probability redistribution. That is, the structured modulation introduced by Type-I derangetropy ensures that probability adjustments remain coherent across the entire distribution, avoiding artificial drifts that would compromise the probabilistic integrity of the refined density function.

The presence of the term $\pi^2$ in the differential equation highlights the intrinsic oscillatory nature of the refinement process. This term ensures that the transformation does not merely act as a smoothing operator but instead introduces a structured modulation mechanism that preserves critical features of the original distribution. The oscillatory behavior dictated by this term prevents excessive entropy flattening while allowing the transformation to refine probability densities dynamically. As a result, the governing equation encodes an evolving probability transformation where entropy regulation, probabilistic redistribution, and oscillatory modulation contribute to a structured refinement process that maintains coherence in probability mass adjustments.

\section{Type-II Derangetropy}
\label{Sec:Type-II}
\subsection{Definition}
The Type-II derangetropy functional represents a refinement process that enhances uncertainty while maintaining the overall probabilistic structure of the distribution. Unlike its Type-I counterpart, which redistributes probability mass in a manner that mitigates uncertainty, Type-II derangetropy operates through an entropy-amplifying transformation that reinforces the contribution of high-entropy regions. The notion of the Type-II derangetropy functional is formally defined below.

\begin{definition}[Type-II Derangetropy]
	\label{Def:TypeII}
	The \textit{Type-II derangetropy functional} $\tau: \mathcal{L}^2(\mathbb{R}, \mathscr{B}_{\mathbb{R}}, \lambda) \rightarrow \mathcal{L}^2(\mathbb{R}, \mathscr{B}_{\mathbb{R}}, \lambda)$ is defined by the following mapping
	\begin{equation}
		\tau[f](x) = \left(\frac{e}{\pi}\right) \dfrac{\sin(\pi F(x))}{F(x)^{F(x)} (1-F(x))^{1-F(x)}}   f(x),
	\end{equation}
	or, alternatively, as a Fourier-type transformation
	\begin{equation}
		\tau[f](x) = \left(\frac{e}{\pi}\right) \sin(\pi F(x))\,  e^{H_B(F(x))} f(x).
	\end{equation}
	The~evaluation of the derangetropy functional of Type-II at a specific point $x \in \mathbb{R}$ is denoted by $\tau_f(x)$.
\end{definition}

The defining characteristic of Type-II derangetropy is its ability to amplify entropy through the weighting factor $e^{H_B(F(x))}$, which enhances probability mass in high-entropy regions while reducing its relative influence in areas of low entropy. Unlike Type-I derangetropy, which suppresses entropy to structure probability redistribution, Type-II derangetropy increases entropy by emphasizing uncertainty-driven refinements. Rather than redistributing probability mass uniformly, the transformation reshapes the density in a way that increases the contribution of regions where entropy is higher while reducing the dominance of more concentrated areas.

The sinusoidal term $\sin (\pi F(x))$ ensures that the transformation remains structured by regulating how entropy amplification modifies probability mass. Without this modulation, the entropy-amplified density $e^{H_B(F(x))} f(x)$ would introduce large variations, potentially distorting the overall density. The sine function interacts with entropy amplification to control the redistribution process, reinforcing probability density adjustments while preventing excessive deviation from the original distribution. This interaction ensures that refinements follow a controlled transformation rather than an unbounded expansion of probability mass.

Unlike Type-I derangetropy, which refines probability density while constraining uncertainty, Type-II derangetropy amplifies entropy while directing the redistribution of probability mass according to its entropy profile. This does not merely shift probability mass to high-entropy regions but increases their relative contribution while reducing the influence of low-entropy areas. The transformation does not lead to indiscriminate dispersion but instead enforces an entropy-driven redistribution that aligns with the intrinsic uncertainty of the distribution.

The self-regulating nature of Type-II derangetropy ensures that entropy amplification follows an adaptive process where probability mass is redistributed in accordance with entropy gradients. Instead of increasing entropy arbitrarily, the transformation adjusts the probability density function in a way that preserves its structural properties while refining its information content. Unlike standard transformations that impose smoothing effects or regularization constraints, Type-II derangetropy enhances entropy while preserving meaningful distinctions within the probability distribution. This controlled entropy amplification suggests applications in probability density refinement, entropy-aware transformations, and statistical modeling, where increasing entropy while maintaining structural coherence is crucial for probabilistic inference.

Type-II derangetropy becomes relevant in contexts where high-entropy regions are not noise to be suppressed but rather signals of inherent system dynamics, such as in stochastic processes, turbulent flows, or volatile financial markets. By amplifying entropy instead of attenuating it, Type-II captures the natural tendency of these systems to drift into disordered or diffusive states. For instance, in the modeling of high-frequency trading behavior or biophysical systems under random forcing (e.g., ion channel kinetics, intracellular transport), local entropy peaks are often indicative of regime transitions or instability fronts. Type-II derangetropy allows these zones to be magnified, helping to identify precursors of phase change or collapse. Additionally, its application is beneficial in epidemiological modeling of contagion, where regions of maximal uncertainty (e.g., transmission rate variability) are critical to the propagation mechanism and thus should be weighted more heavily in information-theoretic assessments. In essence, Type-II is a tool for systems where informational disorder is not just present, but mechanistically central.

\subsection{Empirical Observations}
The behavior of Type-I and Type-II derangetropy functionals highlights fundamental differences in probability mass refinement. Each functional encodes a distinct form of entropy modulation, shaping probability density according to structured probability evolution and entropy-aware transformations. Figure \ref{fig:Derangetropy_Distributions} examines five representative distributions-uniform, normal, exponential, semicircle, and arcsine-illustrating how these functionals respond to symmetry, skewness, and boundary effects.

For the uniform distribution, where probability is evenly distributed across its support, Type-I derangetropy reshapes it into a semi-parabolic density, peaking at the median where entropy is most balanced. As probability mass moves toward the boundaries, entropy decreases, leading to a decline in functional values. The Type-II transformation further enhances central probability concentration, resulting in a bell-shaped density that decays more sharply near the edges.

For the normal distribution, both transformations preserve symmetry while refining probability mass around the mean. Type-I derangetropy applies a smooth redistribution, maintaining the general shape, while Type-II amplifies the central peak, increasing probability concentration in high-entropy regions and compressing lower-probability areas.

For the exponential distribution, which exhibits strong skewness, both functionals adjust the density while retaining its asymmetry. Type-I derangetropy smooths out the distribution, moderating the steep decline at low values while preserving the overall shape. Type-II intensifies probability concentration in the high-density region, further emphasizing entropy-driven refinements while reducing influence in lower-probability areas.

For the semicircle distribution, both functionals preserve its compact support and symmetry but differ in their refinement strategies. Type-I maintains the overall shape while introducing mild adjustments near the median. Type-II amplifies probability mass at the center, reducing density near the boundaries, reflecting its tendency to reinforce entropy-dominant regions more aggressively.

For the arcsine distribution, where probability mass is highly concentrated at the boundaries, the transformations yield particularly distinct refinements. Type-I derangetropy redistributes mass away from the edges, producing an inverted semi-parabolic density that spreads probability more evenly. Type-II generates a semi-parabolic shape peaking at the median, significantly reducing mass at the boundaries and reinforcing entropy-maximizing regions. This contrast underscores the functional distinction: while Type-I ensures proportional redistribution, Type-II concentrates mass in high-entropy areas, diminishing lower-entropy contributions.

These observations demonstrate that Type-I derangetropy functions as an entropy-balancing transformation, refining probability distributions while preserving structural proportions. Type-II derangetropy, in contrast, acts as an entropy-enhancing transformation, intensifying probability mass concentration in high-entropy regions while suppressing low-entropy contributions. This distinction makes Type-II particularly suited for applications requiring stronger probability centralization, while Type-I is more appropriate for cases demanding balanced entropy refinement.

\begin{figure}
	
	\subfloat{{\includegraphics[scale=0.55]{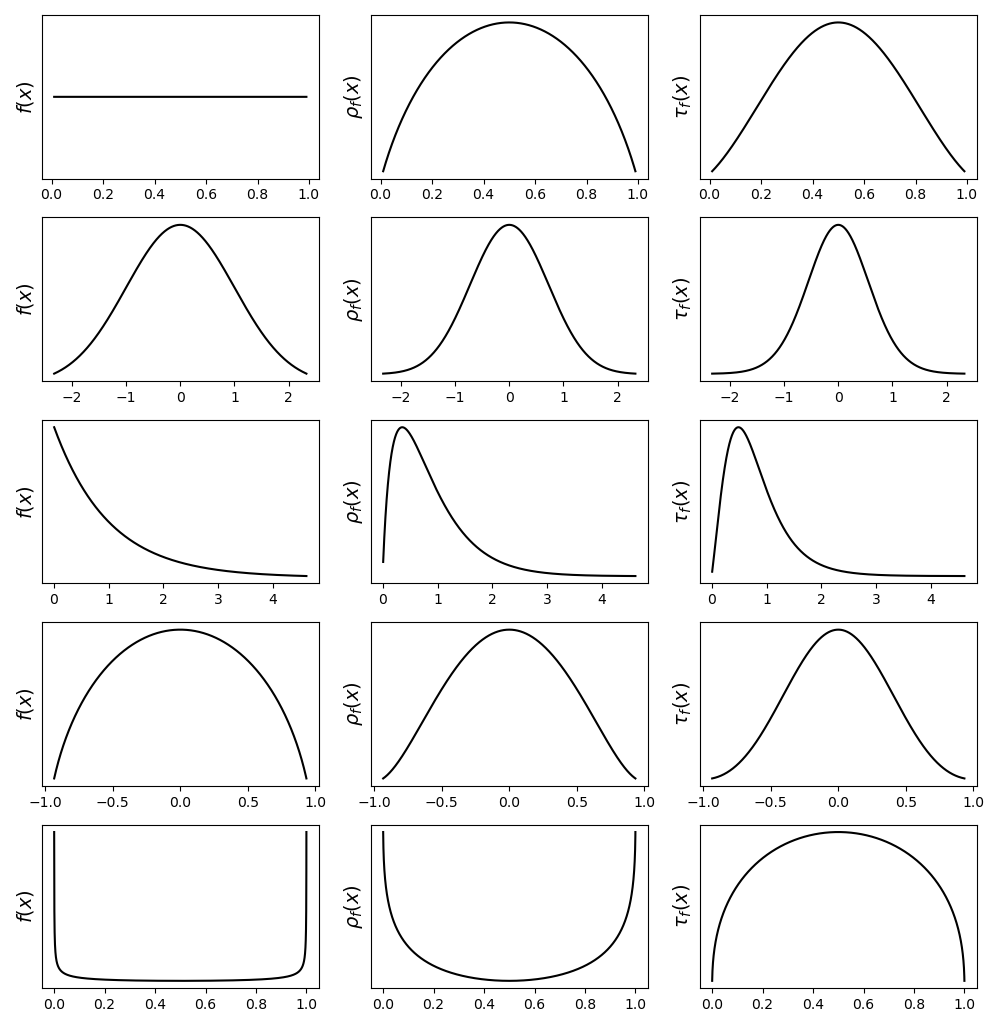} }}%
	
	\caption{{Plots of}
		probability density functions $f(x)$ (\textbf{left}), Type-I derangetropy functionals $\rho_f(x)$ (\textbf{middle}), and Type-II derangetropy functionals $\tau_f(x)$ (\textbf{right}) for $\operatorname{uniform}\, (0,1)$, $\operatorname{normal}\, (0,1)$, $\operatorname{exponential} \,(1)$, $\operatorname{semicircle} \,(-1,1)$, and~$\operatorname{arcsine}\, (0,1)$ distributions.}%
	\label{fig:Derangetropy_Distributions}%
\end{figure}

\subsection{Mathematical Properties}

We now establish key mathematical properties of the Type-II derangetropy functional that underscore its utility in analyzing probability distributions. For any absolutely continuous PDF $f(x)$, the derangetropy functional $\tau[f](x)$ is a nonlinear operator that belongs to the space $C^{\infty}(\mathbb{R})$ having the following first derivative\vspace{6pt}
\begin{equation}
	\label{Eq:Derivative}
	\frac{d}{dx} \tau_f(x) = \tau_f(x) \left[ \pi \cot(\pi F(x)) + \log\left(\frac{F(x)}{1-F(x)}\right) + \frac{f'(x)}{f(x)} \right],
\end{equation}
where the derivatives are taken with respect to $x$. 

The~following theorem shows that the Type-II derangetropy maps the PDF $f(x)$ into another valid probability density function.
\begin{theorem}
	For any absolutely continuous $f(x)$, the Type-II derangetropy functional $\tau_f(x)$ is a valid probability density function.
\end{theorem}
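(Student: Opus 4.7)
The plan is to mirror the structure of the Type-I proof almost verbatim: first establish pointwise non-negativity of $\tau_f(x)$, then reduce the total-mass integral to a one-dimensional integral on $[0,1]$ by a change of variable, and finally identify that integral with a closed-form constant that makes the normalization come out to $1$.

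For non-negativity, I would just read off the factors in the definition. The prefactor $e/\pi$ is positive, $f(x)\ge 0$ since $f$ is a density, and $e^{H_B(F(x))} = (F(x)^{F(x)}(1-F(x))^{1-F(x)})^{-1}$ is positive on $(0,1)$ (with the standard convention $0^0=1$ at the endpoints, or by a limiting argument). The only factor whose sign is not immediate is $\sin(\pi F(x))$, but since $F(x)\in[0,1]$, the argument $\pi F(x)\in[0,\pi]$, where sine is non-negative. Thus $\tau_f(x)\ge 0$ everywhere.

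For the normalization step, I would perform the same substitution $z=F(x)$, $dz=f(x)\,dx$ used in the Type-I proof. Because $F$ is absolutely continuous with $F(-\infty)=0$ and $F(+\infty)=1$, this recasts the integral as
\begin{equation*}
\int_{-\infty}^{\infty}\tau_f(x)\,dx
=\frac{e}{\pi}\int_0^1\frac{\sin(\pi z)}{z^{z}(1-z)^{1-z}}\,dz.
\end{equation*}
The remaining task is then purely analytic: show that the right-hand integral equals $\pi/e$. I would appeal to the companion identity to the one invoked in the Type-I proof, namely
\begin{equation*}
\int_0^1 \frac{\sin(\pi z)}{z^{z}(1-z)^{1-z}}\,dz=\frac{\pi}{e},
\end{equation*}
which is established in \cite{ataei2024derangetropy} alongside $\int_0^1 \sin(\pi z)\,z^{z}(1-z)^{1-z}\,dz=\pi e/24$. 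Plugging this value into the displayed expression above yields exactly $1$.

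The main obstacle, and really the only nontrivial content, is justifying this closed-form evaluation. Unlike the Type-I case, the integrand $e^{H_B(z)}\sin(\pi z)$ is not obviously tame near the endpoints because $e^{H_B(z)}$ diverges there, although the divergence is mild (logarithmic in the exponent) and is tempered by $\sin(\pi z)$ vanishing linearly at $0$ and $1$; integrability must be checked explicitly before any identity is invoked. Assuming the identity from the prior work is available, the proof then reduces to the bookkeeping steps above. If one wished to give a self-contained derivation of the identity, the natural route would be to expand $e^{H_B(z)}$ in a Fourier sine series on $[0,1]$ (using its symmetry about $z=1/2$, so that only odd harmonics appear) and read off the $n=1$ coefficient, but this is not needed for the statement at hand.
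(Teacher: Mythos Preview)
Your proposal is correct and follows essentially the same approach as the paper: non-negativity from the factors, then the substitution $z=F(x)$ to reduce to $\frac{e}{\pi}\int_0^1 \sin(\pi z)\,z^{-z}(1-z)^{-(1-z)}\,dz$, and finally invoking the closed-form value $\pi/e$. The one discrepancy is the source of that identity: you attribute it to \cite{ataei2024derangetropy}, but the paper establishes it in its own Appendix~A via a contour-integration argument (only the Type-I identity $\int_0^1 \sin(\pi z)\,z^z(1-z)^{1-z}\,dz=\pi e/24$ is cited from the prior work).
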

\begin{proof}
	To prove that $\tau_f(x)$ is a valid PDF, we need to show that $\tau_f(x) \geq 0$ for all $x \in \mathbb{R}$ and that $\int_{-\infty}^{\infty} \tau_f(x) \, dx = 1$. The~non-negativity of $\tau_f(x)$ is clear due to the non-negativity of the terms involved in its definition. The~normalization condition can further be verified by the change of variables $z = F(x)$, yielding
	\begin{equation}
		\int_{-\infty}^{\infty} \tau_f(x) \, dx = \int_{0}^{1} \left(\frac{e}{\pi}\right) \dfrac{\sin (\pi z)}{z^z(1-z)^{1-z}}  dz.
	\end{equation}	
	{As shown} 
	in Appendix A, the~integral
	\begin{equation}
		\int_{0}^{1} \dfrac{\sin (\pi z)}{z^z(1-z)^{1-z}} dz = \frac{\pi}{e},
	\end{equation}	
	which, in~turn, implies that
	\begin{equation}
		\int_{-\infty}^{\infty} \tau_f(x) \, dx = 1.
	\end{equation}
	Hence, $\tau_f(x)$ is a valid PDF.
\end{proof}

As a valid probability density function, the Type-II derangetropy functional $\tau_f(x)$ enables the formulation of a self-referential framework, where the informational content of each successive layer is recursively expressed in terms of previous layers. This hierarchical structure captures how probability mass evolves under repeated applications of Type-II derangetropy, encoding both entropy amplification and oscillatory refinement.

The recursive formulation follows from the transformation:
$$
\tau_f^{(n)}(x)=\left(\frac{e}{\pi}\right) \sin \left(\pi \mathcal{G}_f^{(n-1)}(x)\right) e^{H_B\left(\mathcal{G}_f^{(n-1)}(x)\right)} \tau_f^{(n-1)}(x)
$$
where $\tau_f^{(n)}(x)$ represents the $n$th iteration of the transformation, and
$$
\mathcal{G}_f^{(n)}(x)=\int_{-\infty}^x \tau_f^{(n)}(t) d t
$$
is the CDF at the $n$th stage. The recursion is initialized by setting
$$
\tau_f^{(0)}(x)=f(x) \quad \text{and} \quad \mathcal{G}_f^{(0)}(x)=F(x).
$$

This formulation ensures that each subsequent layer $\tau_f^{(n)}(x)$ encapsulates the cumulative effect of all preceding transformations. The iterative refinement process progressively modulates probability redistribution based on the evolving entropy landscape, reinforcing high-entropy regions while systematically adjusting lower-entropy areas. This recursive structure highlights the fundamental role of Type-II derangetropy as an adaptive entropy-enhancing transformation, allowing probability distributions to evolve through a controlled sequence of refinements.

The Type-II derangetropy functional further satisfies the following differential equation.
\begin{theorem}
	Let $X$ be a random variable following a uniform distribution on the interval (0, 1). Then, the~derangetropy functional $\tau_f(x)$ satisfies the following {\color{black}second-order nonlinear ordinary differential equation}
	\begin{equation} \frac{d^2 \tau_f(x)}{d F(x)^2} - 2 \log \left(\frac{1 - F(x)}{F(x)}\right) \frac{d \tau_f(x)}{d F(x)} + \left[\pi^2 + \frac{1}{F(x)(1 - F(x))} + \log^2 \left(\frac{1 - F(x)}{F(x)}\right) \right] \tau_f(x) = 0, \end{equation}
	where the initial conditions are set as
	$$
	\left.\rho_f(x)\right|_{F(x)=0}=0, \quad \text { and }\left.\quad \frac{d \rho_f(x)}{d F(x)}\right|_{F(x)=0}=e.
	$$
\end{theorem}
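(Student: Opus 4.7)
My plan is to mirror the integrating-factor reduction used in the Type-I theorem, but with the reciprocal integrating factor
\begin{equation*}
\mu(F) = e^{H_B(F)} = F^{-F}(1-F)^{-(1-F)},
\end{equation*}
and to show that the substitution $\tau_f = \mu(F)\,w(F)$ collapses the stated equation to the elementary harmonic oscillator $w'' + \pi^2 w = 0$. Since $X$ is uniform on $(0,1)$, we have $F(x)=x$ and $f(x)=1$, so $d/dF$ coincides with $d/dx$; this eliminates Jacobian bookkeeping and lets us work directly in the variable $F$.

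First I would compute
\begin{equation*}
\frac{\mu'(F)}{\mu(F)} = H_B'(F) = \log\!\left(\frac{1-F}{F}\right), \qquad \frac{\mu''(F)}{\mu(F)} = \log^{2}\!\left(\frac{1-F}{F}\right) - \frac{1}{F(1-F)},
\end{equation*}
the second identity following from differentiating the first and using $\frac{d}{dF}\log((1-F)/F) = -1/(F(1-F))$. Substituting $\tau_f = \mu w$ into the stated ODE and expanding yields a sum of three groups: a $w$-term, a $w'$-term, and a $w''$-term, each with $\mu$ as a common factor.

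The heart of the verification is then a sign-bookkeeping check. In the $w'$ coefficient the contributions $2\mu'$ and $-2\mu\log((1-F)/F)$ cancel exactly, because the coefficient of the first-derivative term in this theorem carries the opposite sign to the Type-I case. In the $w$ coefficient, the three pieces $\mu''/\mu$, $-2\log((1-F)/F)\,\mu'/\mu$, and $1/(F(1-F)) + \log^2((1-F)/F)$ combine so that both the $\log^2$ and $1/(F(1-F))$ terms cancel, leaving only $\pi^2$. The equation therefore reduces to $\mu(F)\bigl(w''(F) + \pi^2 w(F)\bigr)=0$, hence $w'' + \pi^2 w = 0$, whose general solution is $w(F) = C_1 \sin(\pi F) + C_2 \cos(\pi F)$.

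It remains to impose the initial conditions. Since $\mu(0)=1$ and $\mu'(0) \cdot 0 = 0$ regardless of the limit of $\mu'$, we have $\tau_f\big|_{F=0} = w(0) = 0$, forcing $C_2 = 0$, and $\tau_f'\big|_{F=0} = w'(0) = C_1\pi = e$, forcing $C_1 = e/\pi$. Reconstructing $\tau_f(x) = \mu(F(x))\,w(F(x)) = \frac{e}{\pi}\sin(\pi F(x))\,e^{H_B(F(x))}$ recovers Definition~\ref{Def:TypeII}, completing the argument. The only real obstacle is the algebraic cancellation in the $w$-coefficient, where the sign-flip relative to the Type-I reduction requires careful accounting; once that identity is verified, the remainder is routine.
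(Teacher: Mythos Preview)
Your proposal is correct and follows essentially the same route as the paper: the paper likewise introduces the integrating factor $\mu(F)=F^{-F}(1-F)^{-(1-F)}$, makes the substitution $\tau_f=\mu\,v$, reduces to $v''+\pi^2 v=0$, and then fixes $C_1=e/\pi$, $C_2=0$ from the initial data. Your write-up is in fact more explicit than the paper's in tracking the cancellations of the $w'$- and $w$-coefficients; the only place to tighten is the phrase ``$\mu'(0)\cdot 0=0$ regardless of the limit of $\mu'$'', which is better justified as $\mu'(F)\,w(F)\sim -C_1\pi\,F\log F\to 0$ since $\mu'(F)\sim -\log F$ diverges only logarithmically.
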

\begin{proof}
	To solve the differential equation, we introduce an integrating factor $\mu(F(x))$ to eliminate the first-order derivative term, which is given by
	\begin{equation}
		\mu(F(x))=e^{F(x) \log \left(\frac{1-F(x)}{F(x)}\right)}(1-F(x))^{-1}.
	\end{equation}
	Rewriting the exponential term, this simplifies to
	\begin{equation}
		\mu(F(x))=F(x)^{-F(x)}(1-F(x))^{-(1-F(x))}.
	\end{equation}	
	Multiplying the entire equation by this integrating factor transforms it into a simpler form, allowing us to define a new function $v(F(x))$ such that	
	$$
	\tau_f(x)=\mu(F(x)) v(F(x)).
	$$
	Substituting this transformation into the differential equation, the first-order derivative term cancels, reducing the equation to	
	$$
	\frac{d^2 v(F(x))}{d F(x)^2}+\pi^2 v(F(x))=0.
	$$
	This equation is a second-order homogeneous linear differential equation with constant coefficients, whose general solution is	
	\begin{equation}
		v(F(x))=C_1 \sin (\pi F(x))+C_2 \cos (\pi F(x)),
	\end{equation}
	where $C_1$ and $C_2$ are constants.
	To determine these constants, we apply the initial conditions, which yields $C_1=\frac{e}{\pi}$ and $C_2=0$, implying that $\tau_f(x)$ is indeed a solution to the differential equation.
\end{proof}

The governing differential equations for Type-I and Type-II derangetropy reveal a fundamental mathematical duality, illustrating how both transformations share the same underlying structure while diverging in their treatment of entropy gradients and probability evolution. This structural connection is encoded in the opposing signs of key terms, specifically, the first-order derivative coefficient and the entropy-dependent potential, which dictate distinct modes of probability refinement.

The log-odds function, appearing in both equations, serves as the fundamental driver of probability redistribution, quantifying the relative balance of probability mass on either side of a given point. In Type-I derangetropy, the positive coefficient of the first derivative term ensures that entropy gradients modulate probability refinement in a balanced manner, preventing excessive concentration while redistributing probability mass proportionally across the domain. The negative entropy correction term reinforces this structured refinement, ensuring that mass is neither excessively centralized nor overdispersed.

The differential equation for Type-II derangetropy introduces an inverse transformation by reversing the sign of the first derivative term, altering the interaction between probability mass and entropy gradients. This reversal amplifies the influence of high-entropy regions, strengthening probability concentration where uncertainty is greatest while reducing mass in low-entropy regions. The positive entropy correction term further reinforces this effect, ensuring that the transformation enhances contrast in probability distribution, rather than merely diffusing mass. Unlike Type-I derangetropy, which ensures entropy balancing across the domain, Type-II derangetropy enforces entropy amplification, strengthening dominant probability regions and reducing the presence of structured low-entropy areas.

This duality between the governing differential equations highlights the complementary nature of TypeI and Type-II derangetropy as entropically conjugate processes. Under Type-I derangetropy, the evolution of probability mass follows a structured refinement, ensuring entropy-controlled redistribution without excessive dominance of high-entropy regions. In contrast, under Type-II derangetropy, probability mass evolution reinforces entropy amplification, emphasizing refinements that concentrate probability mass in regions of highest uncertainty while suppressing low-entropy influence. The governing equations formalize this interaction between structured refinement and entropy-enhancing transformations, demonstrating that while distinct in their mechanisms, these transformations together define a unified framework for entropy-aware probability evolution.

\section{Type-III Derangetropy}
\label{Sec:Type-III}
\subsection{Definition}

The Type-III derangetropy functional introduces a novel probability transformation based on quadratic sine modulation of the probability density function.

\begin{definition}[Type-III Derangetropy]
	\label{Def:TypeIII}
	The \textit{Type-III derangetropy functional} $\nu: \mathcal{L}^2(\mathbb{R}, \mathscr{B}_{\mathbb{R}}, \lambda) \rightarrow \mathcal{L}^2(\mathbb{R}, \mathscr{B}_{\mathbb{R}}, \lambda)$ is defined by the following mapping
	\begin{equation}
		\nu[f](x) = 2 \sin^2(\pi F(x))\  f(x).
	\end{equation}
	The~evaluation of the derangetropy functional of Type-III at a specific point $x \in \mathbb{R}$ is denoted by $\nu_f(x)$.
\end{definition}

This transformation introduces a wave-like probability refinement mechanism, redistributing probability mass while maintaining a structured balance. The function $\sin ^2(\pi F(x))$ serves as a nonlinear modulation function, ensuring that probability mass is amplified near central regions while being suppressed at the boundaries.

Unlike entropy-based refinements, which redistribute probability mass in response to local uncertainty, Type-III derangetropy introduces a purely oscillatory modulation mechanism. It applies a sinusoidal weighting, implementing a structured amplification-suppression scheme: mid-range cumulative probabilities receive maximal weighting, while regions near the extremes are progressively attenuated. Notably, the transformation is bounded within the interval $[0,2]$, ensuring that it preserves global integrability and prevents unbounded growth across iterations. This boundedness gives rise to a self-limiting behavior, allowing the transformation to refine local features while maintaining the overall probabilistic structure.

In contrast to Type-I and Type-II derangetropies?which explicitly couple transformation intensity to local entropy?Type-III is entropy-agnostic. Its quadratic sine kernel enforces central amplification, redistributing probability mass in a phase-symmetric fashion that is sensitive to the waveform geometry of the cumulative distribution. This property renders Type-III particularly effective for tasks involving probability sharpening, contrast enhancement, and density regularization, especially where entropy modulation is undesirable or misleading.

Type-III derangetropy is well-suited for phase-dominated systems, where resonance and periodic structure govern information flow. These include applications in electrophysiological signal analysis, communication waveforms, and mechanical or optical oscillators, where preservation of coherent phase relationships is critical. Because the transformation operates independently of entropy, it enables neutral redistribution of density with respect to intrinsic oscillatory modes. This is especially valuable in neuroengineering, for isolating biomarkers such as alpha, beta, and gamma bands, or in seismic and speech signal processing, where it facilitates the detection of cyclic modulations without imposing artificial smoothness. Furthermore, under repeated application, the transformation converges toward a Gaussian profile in the Fourier domain, echoing the spectral diffusion observed in many dissipative dynamical systems. This convergence establishes Type-III derangetropy as a natural operator for spectral embedding, phase tracking, and harmonic extraction within complex cyclic environments.

\subsection{Mathematical Properties}

For~any absolutely continuous PDF $\nu(x)$, the Type-III derangetropy functional $\nu[f](x)$ is a nonlinear operator that belongs to the space $C^{\infty}(\mathbb{R})$ having the following first derivative\vspace{6pt}
\begin{equation}
	\label{Eq:Derivative}
	\frac{d}{dx} \nu_f(x) = \nu_f(x) \left[ 2\pi \cot(\pi F(x)) f(x) + \frac{f'(x)}{f(x)} \right],
\end{equation}
where the derivatives are taken with respect to $x$. 

The~following theorem shows that the Type-III derangetropy maps the PDF $f(x)$ into another valid probability density function.
\begin{theorem}
	For any absolutely continuous $f(x)$, the Type-III derangetropy functional $\nu_f(x)$ is a valid probability density function.
\end{theorem}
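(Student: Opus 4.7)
The plan is to verify the two defining properties of a probability density function: non-negativity everywhere and unit total mass. Non-negativity is immediate by inspection. Since $f(x) \geq 0$ almost everywhere by hypothesis, and $\sin^2(\pi F(x)) \geq 0$ as a square of a real-valued function, the product $\nu_f(x) = 2\sin^2(\pi F(x)) f(x)$ inherits non-negativity pointwise. The leading constant $2$ does not affect the sign. This step requires no further argument and parallels the opening of the Type-I and Type-II proofs.

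For the normalization condition, I would mirror the change-of-variables technique used in the preceding theorems. Setting $z = F(x)$, so that $dz = f(x)\, dx$, and using that $F$ maps the support of $f$ onto $(0,1)$ (by absolute continuity of the distribution), the improper integral transforms into
\begin{equation*}
\int_{-\infty}^{\infty} \nu_f(x)\, dx = \int_0^1 2\sin^2(\pi z)\, dz.
\end{equation*}
Unlike the Type-I and Type-II cases, the resulting integrand no longer contains the transcendental factor $z^z(1-z)^{1-z}$, so no appeal to the auxiliary integral identities from \cite{ataei2024derangetropy} or Appendix A is needed. I would reduce via the standard half-angle identity $\sin^2(\pi z) = \tfrac{1}{2}(1 - \cos(2\pi z))$, evaluate the elementary integral $\int_0^1 (1 - \cos(2\pi z))\, dz = 1$, and conclude $\int_{-\infty}^{\infty} \nu_f(x)\, dx = 1$.

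There is no real obstacle in this proof: the entropy-free structure of Type-III derangetropy makes the normalization constant trivially computable, which is precisely why the definition omits the factor $e/\pi$ or $24/(\pi e)$ present in the earlier variants. If anything, the only point deserving care is justifying the change of variables when the support of $f$ is not all of $\mathbb{R}$ or when $F$ fails to be strictly increasing on a set of positive measure; in both cases absolute continuity of the distribution ensures that $F$ pushes forward Lebesgue-absolutely-continuous measure to the uniform measure on $(0,1)$, so the substitution remains valid in the Lebesgue sense. I would mention this briefly for completeness and then close the proof.
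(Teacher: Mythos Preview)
Your proposal is correct and follows essentially the same approach as the paper: non-negativity by inspection, then the change of variables $z=F(x)$ followed by the half-angle identity to reduce the normalization integral to $\int_0^1(1-\cos(2\pi z))\,dz=1$. If anything, your remark on justifying the substitution via the pushforward of the absolutely continuous measure is more careful than the paper, which simply applies the change of variables without comment.
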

\begin{proof}
	To prove that $\nu_f(x)$ is a valid PDF, we need to show that $\nu_f(x) \geq 0$ for all $x \in \mathbb{R}$ and that $\int_{-\infty}^{\infty} \nu_f(x) \, dx = 1$. The~non-negativity of $\nu_f(x)$ is clear due to the non-negativity of the terms involved in its definition. The~normalization condition can further be verified by the change of variables $z = F(x)$, yielding
	\begin{equation}
		\int_{-\infty}^{\infty} \nu_f(x) \, dx = \int_{0}^{1} 2 \sin^2(\pi z)\   dz=\int_0^1(1-\cos (2 \pi z)) d z= 1.
	\end{equation}	
	Hence, $\nu_f(x)$ is a valid PDF.
\end{proof}

Similar to its Type-I and Type-II counterparts, the Type-III derangetropy functional $\nu_f(x)$ has self-referential properties, which are expressed as follows: 
$$
\nu_f^{(n)}(x)=2 \sin^2 \left(\pi \mathcal{G}_f^{(n-1)}(x)\right) \tau_f^{(n-1)}(x)
$$
where $\nu_f^{(n)}(x)$ represents the $n$th iteration of the transformation, and
$$
\mathcal{G}_f^{(n)}(x)=\int_{-\infty}^x \nu_f^{(n)}(t) d t
$$
is the CDF at the $n$th stage. The initial conditions for the recursion are set as
$$
\nu_f^{(0)}(x)=f(x) \quad \text{and} \quad \mathcal{G}_f^{(0)}(x)=F(x).
$$

The recursive formulation of Type-III derangetropy highlights a structured refinement process where each successive iteration enhances the probabilistic modulation applied in the previous step. Unlike Type-I, which balances entropy gradients, and Type-II, which amplifies entropy regions, Type-III enforces a periodic redistribution mechanism governed by the sine-squared function. This transformation progressively sharpens probability mass near mid-range values of the CDF while suppressing contributions from extreme regions. As a result, with each iteration, the density function becomes increasingly concentrated in regions where $\sin ^2(\pi F(x))$ is maximized, leading to a self-reinforcing oscillatory refinement.

Figure \ref{fig:Derangetropy_TypeIII} illustrates how successive iterations of Type-III derangetropy transform different probability distributions. Despite the absence of explicit entropy terms, the iterative application of Type-III derangetropy exhibits a clear pattern of redistribution that progressively refines probability mass. The transformation does not merely preserve the shape of the original density but instead pushes probability toward a more structured, centralized form. Regardless of the initial distribution, the repeated application of Type-III derangetropy leads to a progressively smoother and more bell-shaped structure, ultimately converging toward a normal distribution.

For the uniform distribution, the initial transformation reshapes the density into a symmetric bell-like function, emphasizing probability concentration near the median. The second iteration intensifies this effect, reducing mass at the edges and further reinforcing the central region. A similar pattern is observed in the normal distribution, where each iteration enhances probability mass concentration near the mean while reducing the density of the tails. The effect is more apparent in skewed distributions such as the exponential distribution, where the first transformation shifts probability mass away from the heavy tail, while the second iteration further reduces the density gradient, making the distribution increasingly symmetric.

This process is evident in compact distributions such as the semicircle, where each transformation reduces the impact of boundary effects, favoring a more centralized redistribution of probability mass. In the case of the arcsine distribution, which is initially highly concentrated at the boundaries, the transformation effectively mitigates this extreme localization, redistributing probability toward the interior and gradually shifting toward a bell-shaped distribution. The common trend across all distributions is that Type-III derangetropy iteratively smooths and reshapes probability densities in a structured manner, reinforcing central regions while gradually eliminating extremities. The transformation acts as a refinement mechanism that reinforces a well-defined probabilistic structure, ensuring that the density evolves in a direction consistent with the Gaussian limit.

\begin{figure}
	
	\subfloat{{\includegraphics[scale=0.55]{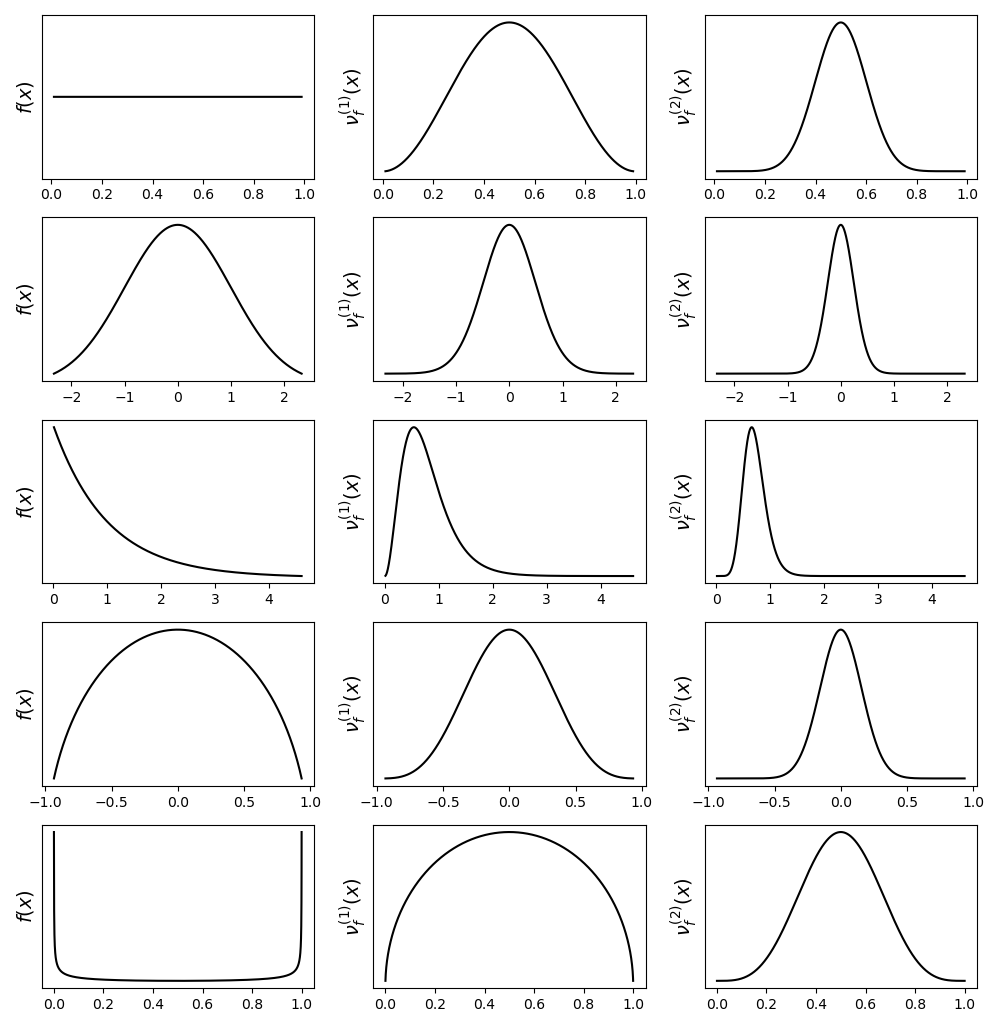} }}%
	
	\caption{{Plots of}
		probability density functions $f(x)$ (\textbf{left}), first-level Type-III derangetropy functionals $\nu_f^{(1)}(x)$ (\textbf{middle}), and second-level Type-III derangetropy functionals $\nu_f^{(2)}(x)$ (\textbf{right}) for $\operatorname{uniform}\, (0,1)$, $\operatorname{normal}\, (0,1)$, $\operatorname{exponential} \,(1)$, $\operatorname{semicircle} \,(-1,1)$, and~$\operatorname{arcsine}\, (0,1)$ distributions.}%
	\label{fig:Derangetropy_TypeIII}%
\end{figure}

The differential equation governing the Type-III derangetropy functional for a uniform distribution, as stated in the following theorem, reveals fundamental principles of wave-based probability refinement and structured probability evolution.

\begin{theorem}
	Let $X$ be a random variable following a uniform distribution on the interval $(0,1)$. Then, the~derangetropy functional $\nu_f(x)$ satisfies the following {\color{black}third-order ordinary differential equation}
	\begin{equation} \frac{d^3 \nu_f(x)}{d F(x)^3}+ 4\pi^2 \frac{d \nu_f(x)}{d F(x)} = 0, \end{equation}
	where the initial conditions are set as
	$$
	\left.\nu_f(x)\right|_{F(x)=0}=0, \quad \left.\quad \frac{d \nu_f(x)}{d F(x)}\right|_{F(x)=0}=0 \quad \text { and }\left.\quad \frac{d^2 \nu_f(x)}{d F(x)^2}\right|_{F(x)=0}=4\pi^2.
	$$
\end{theorem}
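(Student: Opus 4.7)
The plan is to mirror the strategy used for Type-I and Type-II, but to exploit the simpler, purely oscillatory structure of the Type-III kernel. Because $X$ is uniform on $(0,1)$, we have $f(x)=1$ and $F(x)=x$, so derivatives with respect to $F(x)$ coincide with ordinary derivatives with respect to $x$, and the claim reduces to verifying that $\nu_f(x)=2\sin^2(\pi x)$ satisfies a constant-coefficient linear ODE with the prescribed initial data.

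First, I would recast the transformation through the double-angle identity,
\begin{equation*}
\nu_f(x)=2\sin^2(\pi F(x))=1-\cos(2\pi F(x)),
\end{equation*}
which makes the oscillatory eigenstructure transparent. Viewing the stated equation as a linear third-order ODE in $\nu_f$ with $F(x)$ as the independent variable, the characteristic polynomial is
\begin{equation*}
r^{3}+4\pi^{2}r=r(r^{2}+4\pi^{2})=0,
\end{equation*}
with roots $r=0$ and $r=\pm 2\pi i$. Hence the general solution is
\begin{equation*}
\nu_f(x)=C_{0}+C_{1}\cos(2\pi F(x))+C_{2}\sin(2\pi F(x)).
\end{equation*}

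Next, I would pin down the constants from the three initial conditions at $F(x)=0$. The conditions $\nu_f=0$ and $d\nu_f/dF=0$ force $C_{0}+C_{1}=0$ and $2\pi C_{2}=0$, while $d^{2}\nu_f/dF^{2}=4\pi^{2}$ gives $-4\pi^{2}C_{1}=4\pi^{2}$. These yield $C_{0}=1$, $C_{1}=-1$, $C_{2}=0$, so the unique solution is $1-\cos(2\pi F(x))$, which is precisely $2\sin^2(\pi F(x))$, matching $\nu_f(x)$ from Definition~\ref{Def:TypeIII}.

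There is no real obstacle here: unlike the Type-I and Type-II cases, the log-odds weighting is absent, so no integrating factor is needed and the problem collapses to a constant-coefficient ODE whose spectrum is $\{0,\pm 2\pi i\}$. The only mild subtlety worth emphasizing is conceptual rather than technical, namely that the appearance of the nontrivial root $r=0$ (in addition to $\pm 2\pi i$) reflects the constant shift inherent in the identity $2\sin^2(\pi F)=1-\cos(2\pi F)$, and it is this constant mode, combined with the second-derivative initial datum $4\pi^{2}$, that selects $\nu_f$ uniquely among the three-parameter family of solutions.
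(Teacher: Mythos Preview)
Your argument is correct. The paper's proof proceeds in the opposite direction: it starts from the explicit expression $\nu_f(x)=2\sin^2(\pi F(x))$, computes $\frac{d\nu_f}{dF}=2\pi\sin(2\pi F(x))$ and $\frac{d^3\nu_f}{dF^3}=-8\pi^3\sin(2\pi F(x))$, and substitutes these into the ODE to see the terms cancel. You instead solve the constant-coefficient ODE via its characteristic polynomial and then use the three initial data to single out $1-\cos(2\pi F(x))$ from the general solution. Both routes are elementary; the paper's is a one-line verification, while yours has the mild advantage of explicitly confirming that the stated initial conditions are the correct ones (the paper's proof does not check them) and of making clear why a third-order equation, with its extra root $r=0$, is the natural home for $2\sin^2(\pi F)$ once the constant shift in the double-angle identity is taken into account.
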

\begin{proof}
	The first and third derivatives of $\nu_f(x)$ are computed as follows:	
	\begin{equation}
		\frac{d \nu_f(x)}{d F(x)}=2 \pi \sin (2 \pi F(x)),
	\end{equation}
	and
	\begin{equation}
		\frac{d^3 \nu_f(x)}{d F(x)^3}=-8 \pi^3 \sin (2 \pi F(x)).
	\end{equation}
	Substituting these expressions into the differential equation,	we obtain
	\begin{equation}
		-8 \pi^3 \sin (2 \pi F(x))+4 \pi^2(2 \pi \sin (2 \pi F(x)))=0.
	\end{equation}
	Since both terms cancel, it follows that $\nu_f(x)$ satisfies the given third-order differential equation.
\end{proof}

The governing equation for Type-III derangetropy is a third-order linear differential equation whose characteristic equation,
\begin{equation}
	r^3+4 \pi^2 r=0,
\end{equation}
whose roots are
\begin{equation}
	r_1=0, \quad \text{and} \quad r_2, r_3= \pm 2 \pi i.
\end{equation}
This indicates a combination of stationary and oscillatory modes, implying that the probability refinement process follows a wave-like evolution rather than simple diffusion. Unlike purely diffusive models, which allow probability mass to spread indefinitely, the presence of oscillatory solutions implies a cyclic probability refinement mechanism, where mass is systematically redistributed in a structured manner.

The third-order nature of this equation introduces higher-order control over probability redistribution, capturing not only the rate of change of probability mass but also its jerk (the rate of change of acceleration). This distinguishes Type-III derangetropy from simple diffusion-based models, which typically involve only first- or second-order derivatives. The third derivative term ensures dynamic feedback regulation, where probability mass alternates between expansion and contraction, maintaining a structured oscillatory refinement process.

The term $4 \pi^2 \frac{d \nu_f}{d F}$ plays a crucial role in reinforcing oscillatory stabilization. Unlike a conventional damping mechanism, which suppresses oscillations over time, this term acts as a restorative force, ensuring that probability oscillations remain well-structured while preventing uncontrolled dispersion. This prevents the transformation from excessively smoothing out the density, instead enforcing a controlled wave-based refinement of probability mass. This structured evolution explains why Type-III derangetropy systematically enhances mid-range probability densities while suppressing boundary effects, leading to a self-regulating probability distribution that resists both excessive diffusion and over-concentration.

\subsection{Spectral Representation}

The Type-III derangetropy functional defines a structured probability transformation governed by a frequency-modulated mechanism. As a valid probability density function, this transformation induces a well-defined characteristic function, encapsulating its spectral properties. The following theorem establishes the explicit form of the characteristic function associated with Type-III derangetropy.

\begin{theorem}
	The characteristic function of the distribution induced by the Type-III derangetropy transformation is given by
	$$
	\varphi_{\nu}(t)=\varphi_0(t)-\frac{1}{2}\left[\varphi_F^{+}(t)+\varphi_F^{-}(t)\right],
	$$	
	where
	$$
	\varphi_0(t)=\int_{-\infty}^{\infty} e^{i t x} f(x) d x,
	$$	
	is the characteristic function of the original density, and	
	$$
	\varphi_F^{+}(t)=\int_{-\infty}^{\infty} e^{i t x} e^{i 2 \pi F(x)} f(x) d x, \quad \varphi_F^{-}(t)=\int_{-\infty}^{\infty} e^{i t x} e^{-i 2 \pi F(x)} f(x) d x,
	$$	
	are modulated characteristic functions incorporating phase shifts.
\end{theorem}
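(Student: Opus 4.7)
The plan is to reduce the theorem to a direct computation using the elementary trigonometric identity $2\sin^{2}(\theta)=1-\cos(2\theta)$ together with Euler's formula. Since the Type-III functional has the explicit multiplicative form $\nu_f(x)=2\sin^{2}(\pi F(x))f(x)$, no integration by parts or appeal to the governing third-order ODE is needed; the argument is an algebraic manipulation of the defining integral.

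First, I would start from the definition of the characteristic function of the transformed density,
\[
\varphi_{\nu}(t)=\int_{-\infty}^{\infty} e^{itx}\,\nu_f(x)\,dx=\int_{-\infty}^{\infty} e^{itx}\,2\sin^{2}(\pi F(x))\,f(x)\,dx,
\]
and apply the half-angle identity with $\theta=\pi F(x)$ to rewrite the integrand as $e^{itx}f(x)-e^{itx}\cos(2\pi F(x))f(x)$. Linearity of the integral then splits $\varphi_{\nu}(t)$ into two pieces, the first of which is immediately the characteristic function $\varphi_{0}(t)$ of $f$.

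For the second piece, I would substitute $\cos(2\pi F(x))=\tfrac{1}{2}\bigl(e^{i2\pi F(x)}+e^{-i2\pi F(x)}\bigr)$ and use linearity once more to recognize the two resulting integrals as $\varphi_{F}^{+}(t)$ and $\varphi_{F}^{-}(t)$, respectively. Assembling the pieces gives the claimed identity $\varphi_{\nu}(t)=\varphi_{0}(t)-\tfrac{1}{2}\bigl[\varphi_{F}^{+}(t)+\varphi_{F}^{-}(t)\bigr]$.

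The only technical point requiring care is the interchange of finite summation and integration in the splitting step; this is unconditionally justified because every integrand above is bounded in absolute value by $f(x)$, which is integrable by hypothesis, so dominated convergence applies trivially. I do not expect any real obstacle: the statement is essentially a spectral repackaging of the identity $2\sin^{2}=1-\cos(2\,\cdot\,)$, and its conceptual content lies less in the proof itself than in the interpretation of $\varphi_{F}^{\pm}$ as phase-shifted characteristic functions in which the CDF-driven modulation enters as a non-translational phase factor.
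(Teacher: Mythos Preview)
Your proposal is correct and follows essentially the same route as the paper's own proof in Appendix~B: start from the defining integral, apply $2\sin^{2}(\pi F)=1-\cos(2\pi F)$, split off $\varphi_{0}(t)$, then expand the cosine via Euler's formula to identify $\varphi_{F}^{\pm}(t)$. The only addition you make is the explicit dominated-convergence justification for splitting the integral, which the paper omits.
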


\begin{proof}
	See Appendix B.
\end{proof}

Applying this result to the special case of a uniform distribution, we obtain
\begin{equation}
	\varphi_\nu(t)=\frac{e^{i t}-1}{i t}-\frac{1}{2}\left(\frac{e^{i(t+2 \pi)}-1}{i(t+2 \pi)}+\frac{e^{i(t-2 \pi)}-1}{i(t-2 \pi)}\right).
\end{equation}
The characteristic function of the Type-III derangetropy functional for a uniform distribution follows a structured transformation process, which can be expressed in terms of the characteristic function of the base distribution. Given that the characteristic function of a uniform random variable is
\begin{equation}
	\varphi_0(t)=\frac{e^{i t}-1}{i t},
\end{equation}
the transformation induced by the Type-III derangetropy functional modifies the spectral representation as follows:
\begin{equation}
	\varphi_{\nu}(t)=\varphi_0(t)-\frac{1}{2}\left[\varphi_0(t+2 \pi)+\varphi_0(t-2 \pi)\right].
\end{equation}

This transformation defines an operator $\mathcal{T}$ that acts recursively on the characteristic function, introducing frequency shifts at $\pm 2 \pi$ and enforcing a structured refinement process in Fourier space. The recursive application of $\mathcal{T}$ induces a diffusion-like behavior, where successive iterations lead to a spectral stabilization process governed by a structured decay. The following theorem formalizes this structured diffusion process, showing that the characteristic function of the sequence of transformed distributions satisfies a recurrence relation.

\begin{theorem}
	Let $X_n$ be a sequence of random variables whose characteristic function evolves under the transformation
	\begin{equation}
		\varphi_n(t)=\mathcal{T}\left[\varphi_{n-1}\right](t)=\varphi_{n-1}(t)-\frac{1}{2}\left[\varphi_{n-1}(t+2 \pi )+\varphi_{n-1}(t-2 \pi)\right],
	\end{equation}
	and define the re-normalized iterate	
	$$
	\widetilde{X}_n:=\sqrt{2 \pi^2 n}\left(X_n-m\right)
	$$	
	with corresponding $\widetilde{\varphi}_n(t)=\mathbb{E}\left[e^{i t \widetilde{X}_n}\right]$, where $m$ denotes the median of the distribution.
	Then, the scaled sequence $\widetilde{X}_n$ converges in distribution to a standard Gaussian. That is,
	$$
	\lim _{n \rightarrow \infty} \widetilde{\varphi}_n(t)=e^{-t^2 / 2}, \quad \text { for all } t \in \mathbb{R},
	$$	
	with exponential convergence rate $O\left(e^{-2 \pi^2 n}\right)$ in the variance collapse.
\end{theorem}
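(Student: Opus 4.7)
My plan is to translate the Fourier recurrence into a position-space multiplicative iteration, apply Laplace's method around the unique maximum of the resulting kernel, and then lift the Gaussian limit back to the characteristic function via L\'evy's continuity theorem. The key observation is that since $\varphi(t \pm 2\pi) = \int e^{itx}\, e^{\pm 2\pi i x} f(x)\, dx$, the operator $\mathcal{T}$ is the Fourier image of multiplication of the density by the kernel $K(x) := 1 - \cos(2\pi x) = 2\sin^2(\pi x)$. Iterating therefore yields $f_n(x) = Z_n^{-1} K(x)^n f_0(x)$, where $Z_n = \int K^n f_0\, dx$; for the uniform distribution on $(0,1)$ that produces the specific recurrence in the theorem, $\varphi_0(\pm 2\pi)=0$ and the iteration preserves unit mass automatically.

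Next I would extract the Gaussian limit by Laplace's method. The kernel $K$ attains its maximum value $2$ at $x=m=1/2$, and the local expansion $\log K(m+y) = \log 2 + 2\log\cos(\pi y) = \log 2 - \pi^2 y^2 - \tfrac{\pi^4}{6} y^4 - O(y^6)$ produces the controlled representation $K(m+y)^n = 2^n\exp\bigl(-n\pi^2 y^2 + O(n y^4)\bigr)$. Consequently $f_n$ concentrates sharply at $m$ with local variance $\sigma_n^2 = 1/(2n\pi^2)$, and the rescaling $\widetilde{X}_n = \sqrt{2n\pi^2}\,(X_n - m)$ is precisely calibrated so that the substitution $\widetilde{y} = \sqrt{2n\pi^2}\, y$ transforms the quadratic exponent into $-\widetilde{y}^2/2$, the Jacobian supplies the correct $(2\pi)^{-1/2}$ normalization, and the quartic remainder becomes $O(\widetilde{y}^4/n)$ on compacta. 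A standard bulk/tail decomposition then yields $\widetilde{f}_n(\widetilde{y}) \to (2\pi)^{-1/2} e^{-\widetilde{y}^2/2}$ uniformly on compact sets.

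Pointwise convergence $\widetilde{\varphi}_n(t) \to e^{-t^2/2}$ follows by applying L\'evy's continuity theorem, or equivalently by dominated convergence applied directly to $\widetilde{\varphi}_n(t) = e^{-i t m \sqrt{2n\pi^2}}\, \varphi_n\bigl(t\sqrt{2n\pi^2}\bigr)/Z_n$. For the claimed exponential rate, the Laplace tail at radius $\delta$ is bounded by $(K(m+\delta)/K(m))^n = \cos^{2n}(\pi \delta) \lesssim e^{-n\pi^2 \delta^2}$, so that taking $\delta$ comparable to the natural scale of the support delivers the $O(e^{-2\pi^2 n})$ decay attributed to the variance collapse. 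The step I expect to be most delicate is reconciling this exponentially small tail bound with the merely polynomial $O(1/n)$ contribution of the quartic correction in the bulk: to secure a genuinely exponential rate rather than an Edgeworth-type $O(1/n)$ rate, one must carefully calibrate a shrinking bulk radius $\delta_n$ and track constants in the expansion of $\log\cos$, which is precisely where the heuristic Laplace argument needs to be supplemented by sharper remainder bounds.
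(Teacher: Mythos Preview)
Your approach is correct and genuinely different from the paper's. The paper stays entirely in Fourier space: it Taylor-expands the shift operator to identify the recurrence with a discrete heat equation $\varphi_{n+1}(t)\approx \varphi_n(t)-\pi^2\varphi_n''(t)$, extracts a variance recursion $\sigma_{n+1}^2\approx\sigma_n^2-2\pi^2$, and then runs a product/log-sum argument on the rescaled characteristic functions to reach $e^{-t^2/2}$. You instead pull the recurrence back to position space, recognizing $\mathcal{T}$ as the Fourier image of pointwise multiplication by $K(x)=2\sin^2(\pi x)$, so that $f_n\propto K^n f_0$ and the problem becomes a textbook Laplace/large-deviations concentration around the unique maximizer $m=1/2$ of $K$ on $(0,1)$. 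Your route is more concrete and makes the origin of the scaling $\sqrt{2\pi^2 n}$ transparent through the Hessian $-\partial_y^2\log K(m)=2\pi^2$; the paper's route emphasizes the spectral-diffusion interpretation but leans on treating the fixed step $h=2\pi$ as a small Taylor parameter. One minor slip: your claim that ``the iteration preserves unit mass automatically'' for the uniform seed is true only at the first step, since $\int_0^1 (2\sin^2\pi x)^n\,dx=\binom{2n}{n}/2^n\neq 1$ for $n\ge 2$; your formula with the explicit $Z_n$ already handles this, so just drop that sentence. Your closing caveat about the exponential rate is well placed: a straight Laplace expansion gives a polynomial $O(1/n)$ bulk correction, and the paper does not supply the sharper remainder analysis you flag as necessary either.
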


\begin{proof}
	We begin by analyzing the recursive evolution of the characteristic functions \( \varphi_n(t) \) under the transformation
	\begin{equation}
		\varphi_{n+1}(t) = \varphi_n(t) - \frac{1}{2} \left[ \varphi_n(t + h) + \varphi_n(t - h) \right], \quad \text{with } h = 2\pi.
	\end{equation}
	To approximate the right-hand side, we apply a second-order Taylor expansion of \( \varphi_n \) around the point \( t \), assuming sufficient smoothness, which yields
	\begin{equation}
		\varphi_n(t + h) + \varphi_n(t - h) = 2 \varphi_n(t) + h^2 \varphi_n''(t) + O(h^4).
	\end{equation}
	Substituting this expansion into the recurrence yields the following approximation
	\begin{equation}
		\varphi_{n+1}(t) = \varphi_n(t) - \pi^2 \varphi_n''(t) + O(h^4),
	\end{equation}
	which reveals that the recursive map approximately evolves according to a discrete second-order differential operator in Fourier space.
	
	Next, we consider the behavior of the variance of \( X_n \). Since \( \varphi_n(t) \) is the characteristic function of a random variable with median \( m \), and assuming finite second moments, a second-order expansion around the origin gives
	\begin{equation}
		\varphi_n(t) = 1 - \frac{1}{2} \sigma_n^2 t^2 + o(t^2), \quad \text{as } t \to 0,
	\end{equation}
	where \( \sigma_n^2 := \operatorname{Var}(X_n) \). Substituting this expansion into the evolution equation for \( \varphi_n \) leads to
	\begin{equation}
		\sigma_{n+1}^2 = \sigma_n^2 - 2\pi^2 + o(1), \quad \text{as } n \to \infty.
	\end{equation}
	This recurrence implies that the variance decreases monotonically with \( n \), and in particular, it decays linearly until reaching the scale \( \sigma_n^2 = O(1/n) \). Consequently, the sequence \( X_n \) collapses in \( L^2 \) toward the constant value \( m \), i.e.,
	\begin{equation}
		\sigma_n^2 \to 0 \quad \text{and} \quad X_n \xrightarrow{L^2} m.
	\end{equation}
	
	\noindent We now analyze the asymptotic behavior of the re-scaled random variables \( \widetilde{X}_n := \sqrt{2\pi^2 n}(X_n - m) \). Let \( \widetilde{\varphi}_n(t) := \mathbb{E}[e^{it \widetilde{X}_n}] \) denote the characteristic function of \( \widetilde{X}_n \), and define the auxiliary function
	\begin{equation}
		\psi_n(t) := \widetilde{\varphi}_n\left( \frac{t}{\sqrt{2\pi^2 n}} \right).
	\end{equation}
	This scaling is chosen so that \( \psi_n(t) \) captures the low-frequency behavior of \( \widetilde{\varphi}_n \) near the origin, where Gaussian approximations are valid.
	
	Expanding \( \widetilde{\varphi}_n \) to second order, and using the earlier estimate \( \operatorname{Var}(X_n) \sim \frac{1}{2\pi^2 n} \), we obtain the following recurrence
	\begin{equation}
		\psi_{n+1}(t) = \psi_n(t) \left(1 - \frac{t^2}{2n} + o\left(\frac{1}{n}\right) \right).
	\end{equation}
	Taking logarithms and summing yields
	\begin{equation}
		\log \psi_n(t) = \sum_{k=1}^n \log \left(1 - \frac{t^2}{2k} + o\left(\frac{1}{k} \right) \right) = -\frac{t^2}{2} \log n + o(\log n).
	\end{equation}
	Finally, by exponentiating both sides, we conclude that
	\begin{equation}
		\psi_n(t) = e^{ -\frac{1}{2}t^2 + o(1) } \longrightarrow e^{-t^2/2}, \quad \text{as } n \to \infty.
	\end{equation}
	Therefore, the characteristic functions \( \widetilde{\varphi}_n(t) \) converge pointwise to the standard Gaussian characteristic function. By L\'evy?s continuity theorem, this establishes
	\begin{equation}
		\widetilde{X}_n \xrightarrow{d} \mathcal{N}(0,1),
	\end{equation}
	as \( n \to \infty \), which completes the proof.
\end{proof}

The Type-III derangetropy transformation defines a structured probability refinement process that induces an diffusion mechanism in Fourier space. The evolution of the characteristic function follows a discrete approximation to the heat equation
\begin{equation}
	\frac{\partial \varphi_n}{\partial n}=-\pi^2 \frac{\partial^2 \varphi_n}{\partial t^2},
\end{equation}
which describes how probability mass redistributes over successive iterations. Unlike uniform diffusion, which spreads probability evenly, Type-III derangetropy suppresses high-frequency oscillations while preserving structured probability redistribution, leading to a controlled refinement process. The sinusoidal modulation term $\sin ^2(\pi F(x))$ plays a central role in shaping this evolution. Rather than introducing oscillatory probability mass adjustments, this term guides structured refinement by selectively suppressing high-frequency variations, ensuring that probability mass does not dissipate arbitrarily.

The iterative application of Type-III derangetropy ultimately converges to a Gaussian distribution. This convergence is a direct consequence of the underlying diffusion mechanism in Fourier space, where the transformation systematically smooths the probability density while preserving structured patterns. The final equilibrium state corresponds to a probability density with maximal entropy under the imposed constraints, reinforcing the well-known result that repeated refinement of a distribution under structured diffusion leads to a Gaussian limit.

\section{Conclusion and Future Work}
\label{Sec:Conclusion}

This work introduced a generalized framework for derangetropy functionals, information-theoretic operators designed to model the redistribution of probability mass in systems characterized by cyclical modulation, feedback dynamics, and structural fluctuation. By developing and analyzing three distinct classes of derangetropy transformations, Type-I, Type-II, and Type-III, we have demonstrated how entropy modulation and phase sensitivity can be systematically encoded into functional mappings on probability densities. Each class of transformation corresponds to a qualitatively distinct mode of informational dynamics: entropy-attenuating sharpening, entropy-amplifying dispersion, and oscillatory redistribution.

Through formal derivations, we established that these functionals are governed by nonlinear differential equations whose behavior reflects the underlying informational geometry. In particular, we proved that derangetropy, when applied recursively, induces a diffusion process governed by the heat equation in the spectral domain, ultimately converging to a Gaussian characteristic function. This convergence theorem offers a unifying analytical perspective across all three classes and furnishes a tractable means to examine long-term behavior under cyclic modulation.

The present formulation opens several avenues for further exploration. One natural extension involves developing multivariate analogues of derangetropy functionals. While the current framework operates on univariate distributions, many real-world systems, including multi-region neural circuits and high-dimensional ecological models, exhibit structured dependencies that require joint distributional analysis. Extending the derangetropy framework to accommodate tensor-valued or matrix-variate distributions may allow for modeling higher-order interactions and spatiotemporal coupling.

Another direction involves embedding derangetropy into learning architectures. For instance, entropy-modulated functionals can serve as non-parametric regularizers or adaptive filters in recurrent neural networks, where they may offer improvements in stability, sparsity, or interpretability. Likewise, incorporating derangetropy transformations into variational inference or generative modeling frameworks could enable new forms of distributional control grounded in cyclic or feedback-sensitive priors.

Finally, the connection between recursive derangetropy transformations and dynamical systems theory invites a deeper study of the informational stability and bifurcation behavior induced by repeated application. In particular, analyzing the fixed points, limit cycles, or attractors of derangetropy-driven flows may yield insight into the conditions under which information concentrates, disperses, or stabilizes under repeated structural modulation.

Collectively, the generalized derangetropy framework presented here offers a novel and analytically tractable approach to modeling structured information flow. By situating information redistribution within a family of entropy-sensitive and self-referential functionals, this work lays the foundation for future developments at the interface of information theory, dynamical systems, and probabilistic modeling.

\appendix
\section*{Appendix A}\label{app}
\noindent
Let us compute
$\int_{0}^{1} \dfrac{\sin (\pi z)}{z^z(1-z)^{1-z}} dz  $ by defining
\begin{equation}
	\begin{aligned}
		S
		\, := \, &  \, \int_{0}^{1} \dfrac{e^{i\pi z}}{z^z \, (1-z)^{1-z}}  \, dz   \\
		\, = \, &  \, \int_{0}^{1} \dfrac{1}{1-z} \, \dfrac{(1-z)^z}{z^z} \, e^{i\pi z} \, dz   \\
		\, = \, &  \,  \int_{0}^{1} \left(\dfrac{1}{1-z}\right) \, e^{z( i\pi + \log (1-z) - \log(z) )} \, dz \, .   \\
	\end{aligned}
\end{equation}
By substituting $t=\log (1-z) - \log z$, we obtain $$ z=\dfrac{1}{1+e^t} \, , $$ which in turn yields
\vspace{-6pt}
\begin{equation}
	\begin{aligned}
		S
		\, = \, &  \, \int_{-\infty+i\pi}^{+\infty+i\pi} \left(\dfrac{1}{1-e^t}\right) \, e^{\dfrac{t}{1-e^t}} \, dt \, .  \\
	\end{aligned}
\end{equation}
Now, consider the following function
\vspace{-3pt}
\begin{equation}
	g(u) =\left(\dfrac{1}{1-e^u}\right) \, e^{\dfrac{u}{1-e^u}} \, .
\end{equation}
This function is meromorphic on the strip $$ D = \{ u\in \mathbb{C}: -\pi \leq \Im(u) \leq \pi \} $$ and its only point is located at $u=0$ having
\begin{equation}
	Res(g,0) = -\frac{1}{e} \, .
\end{equation}
Next, consider the rectangular clockwise path $\alpha$, as~depicted in Figure~\ref{fig:Contour}, such that $\alpha = \alpha_1 \oplus \alpha_2 \oplus \alpha_3 \oplus \alpha_4$, implying that
\vspace{-1pt}
\begin{equation}
	\oint_{\alpha}g(u) du
	\, = \,   \, -2\pi i\, Res(g,0) \, = \,   \,\, 2 i \, \frac{\pi}{e} \, . 
\end{equation}	
\noindent It is noted that for any sequence $\{u_n\}\subset D$, we have $|u_n|\to \infty$. In~turn, this leads $g(u_n)$ to approach infinity, implying that $\int_{\alpha_2} g(u)du$ and $\int_{\alpha_4} g(u)du$ both become zero as $R\to \infty$. By~further noting that $\int_{\alpha_1} g(u)du = \int_{\alpha_3} g(u)du$, one obtains 
\begin{equation}
	2S = 2\lim\limits_{R \to \infty} \oint_{\alpha_1} g(u) du = 2i\, \dfrac{\pi}{e}
\end{equation}
Thus,
\begin{equation}
	\int_{0}^{1} \dfrac{\sin (\pi z)}{z^z(1-z)^{1-z}} dz \, = \, \Im(S) \, =  \, \frac{\pi}{e}.
\end{equation}

\vspace{-12pt}
\begin{figure}[H]
	
	\includegraphics[scale=0.15]{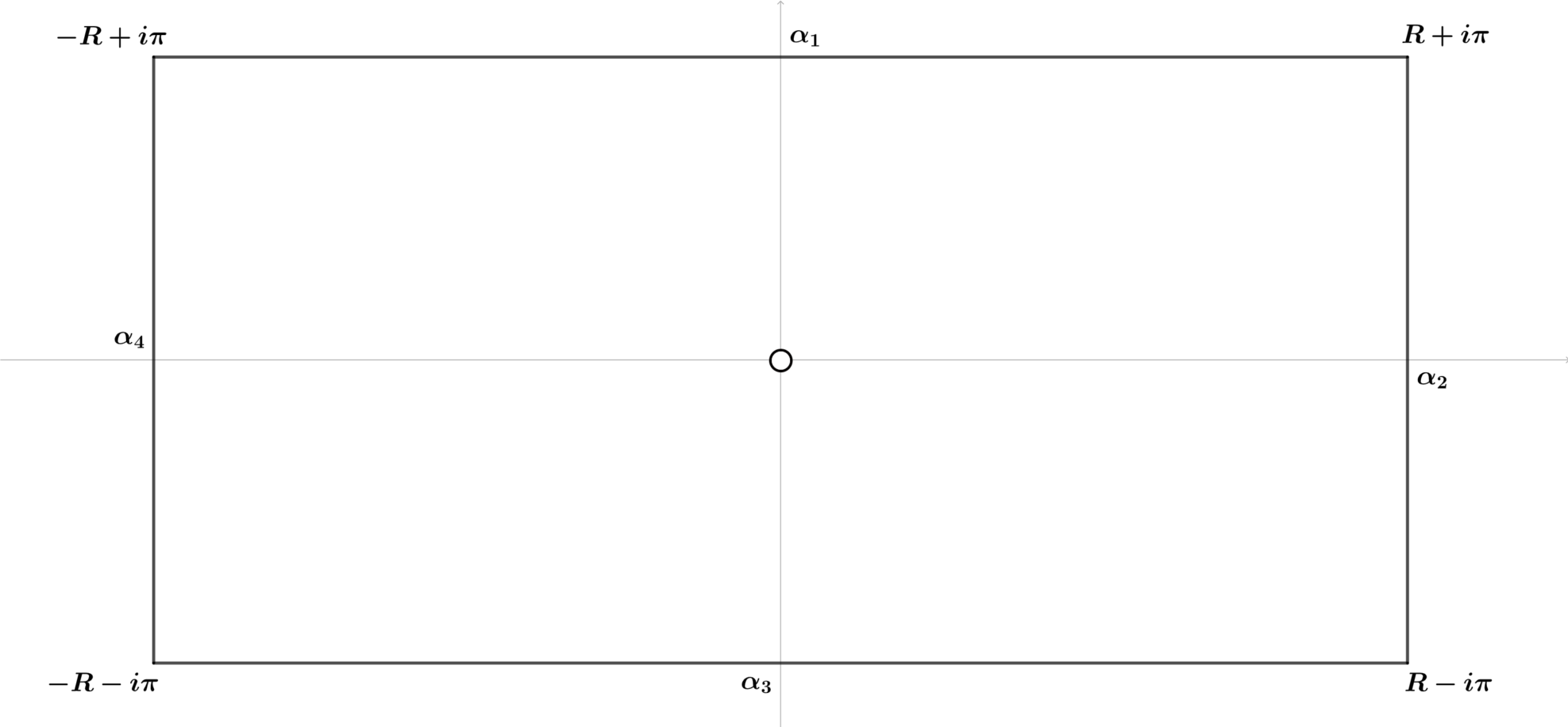}
	\caption{Schematic of the rectangular clockwise integration~path.}
	\label{fig:Contour}
\end{figure}

\section*{Appendix B}\label{char}
\noindent

\noindent	By definition, the characteristic function of $\nu_f(x)$ is
\begin{equation}
	\varphi_{\nu_f}(t)=\int_{-\infty}^{\infty} e^{i t x} \nu_f(x) d x.
\end{equation}
Substituting the definition of the Type-III transformation, we get
\begin{equation}
	\varphi_{\nu_f}(t)=2 \int_{-\infty}^{\infty} e^{i t x} \sin ^2(\pi F(x)) f(x) d x.
\end{equation}	
Thereafter, using the trigonometric identity
\begin{equation}
	\sin ^2(\pi F(x))=\frac{1-\cos (2 \pi F(x))}{2},
\end{equation}	
we rewrite the expression as follows:
\begin{equation}
	\varphi_{\nu_f}(t)=\int_{-\infty}^{\infty} e^{i t x} f(x) d x-\frac{1}{2} \int_{-\infty}^{\infty} e^{i t x} \cos (2 \pi F(x)) f(x) d x.
\end{equation}
The first integral is simply the characteristic function of the original density, where
\begin{equation}
	\varphi_0(t)=\int_{-\infty}^{\infty} e^{i t x} f(x) d x.
\end{equation}
Moreover, using the Euler's identity, we obtain
\begin{equation}
	\cos (2 \pi F(x))=\frac{e^{i 2 \pi F(x)}+e^{-i 2 \pi F(x)}}{2},
\end{equation}	
which leads to	
\begin{equation}
	\int_{-\infty}^{\infty} e^{i t x} \cos (2 \pi F(x)) f(x) d x=\frac{1}{2}\left[\int_{-\infty}^{\infty} e^{i t x} e^{i 2 \pi F(x)} f(x) d x+\int_{-\infty}^{\infty} e^{i t x} e^{-i 2 \pi F(x)} f(x) d x\right].
\end{equation}
Finally, defining the modulated characteristic functions by	
\begin{equation}
	\varphi_F^{+}(t) =\int_{-\infty}^{\infty} e^{i t x} e^{i 2 \pi F(x)} f(x) d x, \\
\end{equation}	
and
\begin{equation}
	\varphi_F^{-}(t) =\int_{-\infty}^{\infty} e^{i t x} e^{-i 2 \pi F(x)} f(x) d x,
\end{equation}
we obtain
\begin{equation}
	\varphi_{\nu_f}(t)=\varphi_0(t)-\frac{1}{2}\left[\varphi_F^{+}(t)+\varphi_F^{-}(t)\right].
\end{equation}

\newpage
\bibliography{references}


\begin{thebibliography}{12}
\ifx \bisbn   \undefined \def \bisbn  #1{ISBN #1}\fi
\ifx \binits  \undefined \def \binits#1{#1}\fi
\ifx \bauthor  \undefined \def \bauthor#1{#1}\fi
\ifx \batitle  \undefined \def \batitle#1{#1}\fi
\ifx \bjtitle  \undefined \def \bjtitle#1{#1}\fi
\ifx \bvolume  \undefined \def \bvolume#1{\textbf{#1}}\fi
\ifx \byear  \undefined \def \byear#1{#1}\fi
\ifx \bissue  \undefined \def \bissue#1{#1}\fi
\ifx \bfpage  \undefined \def \bfpage#1{#1}\fi
\ifx \blpage  \undefined \def \blpage #1{#1}\fi
\ifx \burl  \undefined \def \burl#1{\textsf{#1}}\fi
\ifx \doiurl  \undefined \def \doiurl#1{\url{https://doi.org/#1}}\fi
\ifx \betal  \undefined \def \betal{\textit{et al.}}\fi
\ifx \binstitute  \undefined \def \binstitute#1{#1}\fi
\ifx \binstitutionaled  \undefined \def \binstitutionaled#1{#1}\fi
\ifx \bctitle  \undefined \def \bctitle#1{#1}\fi
\ifx \beditor  \undefined \def \beditor#1{#1}\fi
\ifx \bpublisher  \undefined \def \bpublisher#1{#1}\fi
\ifx \bbtitle  \undefined \def \bbtitle#1{#1}\fi
\ifx \bedition  \undefined \def \bedition#1{#1}\fi
\ifx \bseriesno  \undefined \def \bseriesno#1{#1}\fi
\ifx \blocation  \undefined \def \blocation#1{#1}\fi
\ifx \bsertitle  \undefined \def \bsertitle#1{#1}\fi
\ifx \bsnm \undefined \def \bsnm#1{#1}\fi
\ifx \bsuffix \undefined \def \bsuffix#1{#1}\fi
\ifx \bparticle \undefined \def \bparticle#1{#1}\fi
\ifx \barticle \undefined \def \barticle#1{#1}\fi
\bibcommenthead
\ifx \bconfdate \undefined \def \bconfdate #1{#1}\fi
\ifx \botherref \undefined \def \botherref #1{#1}\fi
\ifx \url \undefined \def \url#1{\textsf{#1}}\fi
\ifx \bchapter \undefined \def \bchapter#1{#1}\fi
\ifx \bbook \undefined \def \bbook#1{#1}\fi
\ifx \bcomment \undefined \def \bcomment#1{#1}\fi
\ifx \oauthor \undefined \def \oauthor#1{#1}\fi
\ifx \citeauthoryear \undefined \def \citeauthoryear#1{#1}\fi
\ifx \endbibitem  \undefined \def \endbibitem {}\fi
\ifx \bconflocation  \undefined \def \bconflocation#1{#1}\fi
\ifx \arxivurl  \undefined \def \arxivurl#1{\textsf{#1}}\fi
\csname PreBibitemsHook\endcsname

\bibitem[\protect\citeauthoryear{Shannon}{1948}]{Shannon1948}
\begin{barticle}
\bauthor{\bsnm{Shannon}, \binits{C.E.}}:
\batitle{A mathematical theory of communication}.
\bjtitle{Bell System Technical Journal}
\bvolume{27}(\bissue{3}),
\bfpage{379}--\blpage{423}
(\byear{1948})
\end{barticle}
\endbibitem

\bibitem[\protect\citeauthoryear{Ataei and Wang}{2024}]{ataei2024derangetropy}
\begin{barticle}
\bauthor{\bsnm{Ataei}, \binits{M.}},
\bauthor{\bsnm{Wang}, \binits{X.}}:
\batitle{Derangetropy in probability distributions and information dynamics}.
\bjtitle{Entropy}
\bvolume{26}(\bissue{11}),
\bfpage{996}
(\byear{2024})
\end{barticle}
\endbibitem

\bibitem[\protect\citeauthoryear{Buzsaki and
  Draguhn}{2004}]{buzsaki2004neuronal}
\begin{barticle}
\bauthor{\bsnm{Buzsaki}, \binits{G.}},
\bauthor{\bsnm{Draguhn}, \binits{A.}}:
\batitle{Neuronal oscillations in cortical networks}.
\bjtitle{science}
\bvolume{304}(\bissue{5679}),
\bfpage{1926}--\blpage{1929}
(\byear{2004})
\end{barticle}
\endbibitem

\bibitem[\protect\citeauthoryear{Shlezinger
  et~al.}{2019}]{shlezinger2019capacity}
\begin{barticle}
\bauthor{\bsnm{Shlezinger}, \binits{N.}},
\bauthor{\bsnm{Abakasanga}, \binits{E.}},
\bauthor{\bsnm{Dabora}, \binits{R.}},
\bauthor{\bsnm{Eldar}, \binits{Y.C.}}:
\batitle{The capacity of memoryless channels with sampled cyclostationary
  gaussian noise}.
\bjtitle{IEEE Transactions on Communications}
\bvolume{68}(\bissue{1}),
\bfpage{106}--\blpage{121}
(\byear{2019})
\end{barticle}
\endbibitem

\bibitem[\protect\citeauthoryear{Dabora and
  Abakasanga}{2023}]{dabora2023capacity}
\begin{barticle}
\bauthor{\bsnm{Dabora}, \binits{R.}},
\bauthor{\bsnm{Abakasanga}, \binits{E.}}:
\batitle{On the capacity of communication channels with memory and sampled
  additive cyclostationary gaussian noise}.
\bjtitle{IEEE Transactions on Information Theory}
\bvolume{69}(\bissue{10}),
\bfpage{6137}--\blpage{6166}
(\byear{2023})
\end{barticle}
\endbibitem

\bibitem[\protect\citeauthoryear{Kramer}{1998}]{kramer1998directed}
\begin{bbook}
\bauthor{\bsnm{Kramer}, \binits{G.}}:
\bbtitle{Directed Information for Channels with Feedback}
vol. \bseriesno{11}.
\bpublisher{Citeseer}, \blocation{???}
(\byear{1998})
\end{bbook}
\endbibitem

\bibitem[\protect\citeauthoryear{Quinn et~al.}{2011}]{quinn2011estimating}
\begin{barticle}
\bauthor{\bsnm{Quinn}, \binits{C.J.}},
\bauthor{\bsnm{Coleman}, \binits{T.P.}},
\bauthor{\bsnm{Kiyavash}, \binits{N.}},
\bauthor{\bsnm{Hatsopoulos}, \binits{N.G.}}:
\batitle{Estimating the directed information to infer causal relationships in
  ensemble neural spike train recordings}.
\bjtitle{Journal of Computational Neuroscience}
\bvolume{30},
\bfpage{17}--\blpage{44}
(\byear{2011})
\end{barticle}
\endbibitem

\bibitem[\protect\citeauthoryear{Tostevin and Ten~Wolde}{2009}]{vin2009mutual}
\begin{barticle}
\bauthor{\bsnm{Tostevin}, \binits{F.}},
\bauthor{\bsnm{Ten~Wolde}, \binits{P.R.}}:
\batitle{Mutual information between input and output trajectories of
  biochemical networks}.
\bjtitle{Physical Review Letters}
\bvolume{102}(\bissue{21}),
\bfpage{218101}
(\byear{2009})
\end{barticle}
\endbibitem

\bibitem[\protect\citeauthoryear{Horowitz and
  Esposito}{2014}]{horowitz2014thermodynamics}
\begin{barticle}
\bauthor{\bsnm{Horowitz}, \binits{J.M.}},
\bauthor{\bsnm{Esposito}, \binits{M.}}:
\batitle{Thermodynamics with continuous information flow}.
\bjtitle{Physical Review X}
\bvolume{4}(\bissue{3}),
\bfpage{031015}
(\byear{2014})
\end{barticle}
\endbibitem

\bibitem[\protect\citeauthoryear{Ptaszy{\'n}ski and
  Esposito}{2019}]{ptaszynski2019thermodynamics}
\begin{barticle}
\bauthor{\bsnm{Ptaszy{\'n}ski}, \binits{K.}},
\bauthor{\bsnm{Esposito}, \binits{M.}}:
\batitle{Thermodynamics of quantum information flows}.
\bjtitle{Physical Review Letters}
\bvolume{122}(\bissue{15}),
\bfpage{150603}
(\byear{2019})
\end{barticle}
\endbibitem

\bibitem[\protect\citeauthoryear{Nicoletti and
  Busiello}{2021}]{nicoletti2021mutual}
\begin{barticle}
\bauthor{\bsnm{Nicoletti}, \binits{G.}},
\bauthor{\bsnm{Busiello}, \binits{D.M.}}:
\batitle{Mutual information disentangles interactions from changing
  environments}.
\bjtitle{Physical Review Letters}
\bvolume{127}(\bissue{22}),
\bfpage{228301}
(\byear{2021})
\end{barticle}
\endbibitem

\bibitem[\protect\citeauthoryear{Ataei et~al.}{2021}]{ataei2021theory}
\begin{barticle}
\bauthor{\bsnm{Ataei}, \binits{M.}},
\bauthor{\bsnm{Chen}, \binits{S.}},
\bauthor{\bsnm{Yang}, \binits{Z.}},
\bauthor{\bsnm{Peyghami}, \binits{M.R.}}:
\batitle{Theory and applications of financial chaos index}.
\bjtitle{Physica A: Statistical Mechanics and its Applications}
\bvolume{580},
\bfpage{126160}
(\byear{2021})
\end{barticle}
\endbibitem

\end{thebibliography}

\end{document}